%% file: main.tex
\documentclass[conference]{IEEEtran}
\IEEEoverridecommandlockouts

\usepackage{cite}
\usepackage{amsmath,amssymb,amsfonts}
\usepackage{algorithmic}
\usepackage{graphicx}
\usepackage{textcomp}
\usepackage{xcolor}
\def\BibTeX{{\rm B\kern-.05em{\sc i\kern-.025em b}\kern-.08em
    T\kern-.1667em\lower.7ex\hbox{E}\kern-.125emX}}

\input{package.tex}

\begin{document}
\input{content/0.title.tex}

\maketitle

\input{content/0.abstract.tex}

\input{content/1.Introduction.tex}
\input{content/2.0.Experimentsetup.tex}
\input{content/2.1.nclairAlgo.tex}
\input{content/2.2clairAlgo.tex}

\input{content/2.3learningaug.tex}
\input{content/3.Conclusion.tex}

\section*{Acknowledgment}
This research is supported by the Ministry of Education, Singapore, under its AcRF Tier 1 (Award RG15/25).

\bibliographystyle{IEEEtran}
\bibliography{bib/DVBP_Learning.bib, bib/DVBP.bib, bib/others.bib, bib/VBP.bib}

\input{content/app-CR}
\input{content/app-huawei}

\input{content/app-uniform_dist_error}

\end{document}

%% file: package.tex
\usepackage{amsmath}
\usepackage{amsfonts}
\usepackage{subcaption}
\usepackage{amsthm}
\usepackage[normalem]{ulem}
\usepackage{enumitem}
\usepackage{bbm}
\newtheorem{theorem}{Theorem}

\newtheorem{proposition}{Proposition}
\newcommand{\norminf}[1]{\left\| #1 \right\|_{\infty}}
\newcommand{\ceil}[1]{\left \lceil{ #1 }\right \rceil }
\newcommand{\OPT}{\mathrm{OPT}}
\DeclareMathOperator{\sspan}{span}
\allowdisplaybreaks[4]
\usepackage[table]{xcolor}
\usepackage{longtable}
\usepackage{url}

%% file: content/0.title.tex
\title{Evaluation of Dynamic Vector Bin Packing for Virtual Machine Placement}

\author{\IEEEauthorblockN{Zong Yu Lee and Xueyan Tang}
\IEEEauthorblockA{Nanyang Technological University, Singapore}
\IEEEauthorblockA{Email: zongyu001@e.ntu.edu.sg, asxytang@ntu.edu.sg}}

%% file: content/0.abstract.tex
\begin{abstract}
Virtual machine placement is a crucial challenge in cloud computing for efficiently utilizing physical machine resources in data centers. Virtual machine placement can be formulated as a \textit{MinUsageTime} Dynamic Vector Bin Packing (DVBP) problem, aiming to minimize the total usage time of the physical machines. This paper evaluates state-of-the-art \textit{MinUsageTime} DVBP algorithms in non-clairvoyant, clairvoyant and learning-augmented online settings, where item durations (virtual machine lifetimes) are unknown, known and predicted, respectively. Besides the algorithms taken from the literature, we also develop several new algorithms or enhancements. Empirical experimentation is carried out with real-world datasets of Microsoft Azure. The insights from the experimental results are discussed to explore the structures of algorithms and promising design elements that work well in practice.
\end{abstract}

%% file: content/1.Introduction.tex
\section{Introduction}
Cloud computing implements the delivery of computing resources to clients through the Internet by adopting a pay-as-you-go model \cite{overviewAWS}. When a client makes a request, the cloud system typically creates a virtual machine, which uses a specified amount of computing resources such as CPU, memory, and storage in a physical machine. There are usually many physical machines in a data center. The cloud system needs to select a physical machine with sufficient resources available to host a requested virtual machine \cite{overviewAWS}. This is known as \textit{virtual machine placement} \cite{overviewVMPlacement}. To serve more clients, multiple virtual machines often share the resources in a physical machine through virtualization. However, if the packing of virtual machines into the physical machines is inefficient, it can cause a waste of resources, known as fragmentation. Fragmentation may lead to resource over-provisioning. According to \cite{protean2020}, even a 1\% reduction in fragmentation could save a cost in the order of \$100M per year for large cloud providers. 

The virtual machine placement problem can be modelled as a \textit{MinUsageTime} Dynamic Vector Bin Packing (DVBP) problem. \textit{MinUsageTime} DVBP is a variation of the classical bin packing problem in which there are items (virtual machines) of vector sizes arriving and departing over time, and they are to be packed into bins (physical machines) such that the capacity constraints of bins are observed at any time. Unlike classical bin packing \cite{STOC_VBP, DBP1983}, the objective of \textit{MinUsageTime} DVBP is to minimize the total usage time of all bins, where the usage time of a bin is defined as the duration in which at least one item is placed in the bin (which indicates the time that the physical machine is used). Extensive research has been done to introduce online packing algorithms and theoretically analyze their competitive ratios (worst-case ratio of the algorithm performance over the optimal solution). However, existing work suffers from two major deficiencies. First, most studies considered the single-dimensional case only, where both item sizes and bin capacities are represented by scalar values \cite{clairdvbp2016,nextfit, clairdvbp2019,OnfirstFit2016,dvp2016ondemand,dbp2014,dbp2017,predictiondvbp2022,predictiondvbp2024,buchbinder2021online}. In contrast, virtual machines normally demand multiple types of resources (CPU, memory, storage, etc.) necessitating the consideration of the multi-dimensional case. Second, very little research evaluated the proposed algorithms empirically. The competitive ratio typically characterizes the algorithm performance in some corner cases. The practical performance of the packing algorithms for virtual machine placement remains unclear. To the best of our knowledge, only two studies investigated \textit{MinUsageTime} DVBP in the multi-dimensional case and conducted empirical comparisons \cite{murhekar2023dynamic, greedy}. Nevertheless, these evaluations used synthetic data only and not real-life datasets. Moreover, they did not explore the learning-augmented setting (where some item characteristics are predicted and they are prone to errors) and algorithms.

In this paper, we aim to bring the theory to practice by evaluating a comprehensive collection of packing algorithms using real-world datasets of Microsoft Azure. Besides the state-of-the-art algorithms taken from the literature, we also propose several new algorithms or enhancements and include them in the evaluation. We carry out extensive empirical evaluations of \textit{MinUsageTime} DVBP algorithms in a variety of different settings. The evaluation results provide insights into the structures of algorithms that work well in practice. Our contributions are summarized as follows.
\begin{itemize}
\item We develop several new algorithms or enhancements, including Round Robin Next Fit, Nearest Remaining Time, modified Hybrid, modified RCP and PPE. We also conduct competitive analysis for the new algorithms and improve the analysis of some existing algorithms.
\item Our evaluation shows that Any Fit (never open a new bin for an incoming item if it can fit in any current open bin) is a promising feature that can improve empirical packing efficiency if incorporated into the algorithm design. 
\item We show that categorizing items by their departure times outperforms that by their durations in the clairvoyant setting, but the former is less robust against prediction errors than the latter in the learning-augmented setting.
\item Our evaluation also shows that dynamic categorization of bins performs better than static categorization.
\end{itemize}

The rest of this paper is organized as follows. 
Section \ref{sec:problem} presents a formal definition of the \textit{MinUsageTime} DVBP problem. Section \ref{sec:setup} discusses the experimental setup. Sections \ref{sec:nonclairvoyant}, \ref{sec:clairvoyant} and \ref{sec:learningaugmented} introduce and evaluate algorithms in the non-clairvoyant, clairvoyant and learning-augmented online settings respectively. Finally, Section \ref{sec:conclusion} concludes the paper.

\section{Problem Definition and Preliminaries}
\label{sec:problem}
The inputs to the \textit{MinUsageTime} Dynamic Vector Bin Packing (DVBP) problem include a set of items $\mathcal{R}$, where each item $r \in \mathcal{R}$ is characterized by a $d$-dimensional vector $s(r)$ describing the item size and a time interval $I(r)$. The $d$-dimensional vector $s(r)$ represents the sizes of the item along $d$ different dimensions. The time interval $I(r) = [I(r)^-, I(r)^+)$ represents the active interval of the item, where $I(r)^-$ denotes the arrival time and $I(r)^+$ denotes the departure time of the item. The interval length $I(r)^+ - I(r)^-$ is called the duration of the item. The item is said \textit{active} during $I(r)$. 

There is a sufficiently large pool $\mathcal{P}$ of bins available to pack items. All the bins have the same capacity. The capacity of a bin is also represented by a $d$-dimensional vector. Without loss of generality, we assume that the capacity of a bin is given by $\langle 1, 1, \dots, 1 \rangle$, that is, all dimensions are normalized. Accordingly, it can be assumed that the size of an item along each dimension is no larger than $1$. 

In the context of virtual machine placement, each item models a virtual machine (VM). The $d$-dimensional size vector models the resource demands of the VM for $d$ different types of resources (CPU, memory, storage, etc.). The arrival and departure of the item model the creation and termination of the VM respectively. Each bin corresponds to a physical machine (PM). The $d$-dimensional capacity vector models the resource capacities of the PM for the $d$ types of resources.

A packing algorithm places each item into a bin while ensuring that at any time, the total size of the items placed in a bin does not exceed its capacity along all dimensions. The \textit{usage time} of a bin is the length of time in which there is at least one item placed in the bin. The accumulated bin usage time is the sum of the usage times of all the bins. From another perspective, the accumulated bin usage time can be considered as the integration of the number of bins used at all time instants over the time horizon.

The objective of the \textit{MinUsageTime} DVBP problem is to design a packing algorithm that minimizes the accumulated bin usage time. The key to optimizing the bin usage time is to increase the packing efficiency and hence the capacity utilization of bins. Inefficient placement of items would result in fragmentation and unnecessary use of additional bins. The total bin usage time of an optimal packing solution for a set of items $\mathcal{R}$ has the following lower bound \cite{murhekar2023dynamic}:
\begin{equation}
\int_{-\infty}^{\infty}{\ceil{\norminf{\sum_{r \in \mathcal{R} \text{ and } t \in I(r)}{s(r)}}}}\, \mathrm{d}t, \label{eq:lowerbound} \\
\end{equation}
since the number of bins used at any time $t$ must be at least the $\ell_\infty$-norm (the maximum value across all dimensions) of the aggregate size vector of all the active items at $t$.

We focus on the online setting, where items are released at their respective arrival times. Each item must be placed into a bin immediately when it arrives, without any information of the items that will arrive in the future. The online setting can further be divided into non-clairvoyant and clairvoyant settings depending on whether  the departure time of an item is known at its arrival. In addition, a new setting known as the learning-augmented setting \cite{kumar2018nips,lykouris2018icml} has emerged recently, which will be explained later. Theoretically, the competitive ratio of an online algorithm refers to the worst-case ratio between a solution constructed by the algorithm and an optimal solution \cite{Borodin1998OnlineCA}. 

%% file: content/2.0.Experimentsetup.tex
\section{Experiment Setup}
\label{sec:setup}

We use two publicly available datasets from Microsoft Azure and Huawei Cloud to empirically evaluate \textit{MinUsageTime} DVBP.

Microsoft Azure \cite{protean2020} releases part of the workload on Microsoft's Azure Compute over a 14-day period (\url{https://github.com/Azure/AzurePublicDataset/blob/master/AzureTracesForPacking2020.md}).

The dataset consists of a VMType table and a VMRequest table. There are different virtual machine (VM) types with distinct resource demands and different physical machine (PM) types with distinct resource capacities. Each record in the VMType table specifies the resource usage of a virtual machine type on a physical machine type. The fields of the VMType schema include:
\begin{itemize}
    \item virtual machine (VM) type,
    \item physical machine (PM) type,
    \item core, memory, HDD, SSD, NIC (fractional usage of core, memory, hard disk drive, solid-state drive, network bandwidth by the VM on the PM),    
\end{itemize}
Each record in the VMRequest table specifies a virtual machine request. The fields of the VMRequest schema include:
\begin{itemize}
    \item virtual machine (VM) type requested,
    \item starttime (the starting time of the VM),
    \item endtime (the termination time of the VM).
\end{itemize}
We clean and convert the dataset into \textit{MinUsageTime} DVBP instances as follows.

    \textbf{Records.} The VMType table has 488 different VM types, 35 different PM types, and a total of 4619 records (not all records of distinct VM and PM type pairs are available because some pairs are not compatible with each other \cite{protean2020}). The VMRequest table has 5,559,800 requests. 

    \textbf{Time Information.} The VMRequest table contains complete workload in the 14-day period of data collection. In the VMRequest table, there are records where the \textit{endtime} field has null values, which means that the VM is not terminated by the end of data collection. We removed them because their VM lifetimes are unknown. Meanwhile, there are records where the \textit{starttime} field has negative values, which means that the VM is created before the start of data collection; and there are also records where the \textit{endtime} field has values greater than 14 days, which indicates that the data collector continues to monitor the created VM (at most until 90 days) and the VM is terminated within the monitoring period. The above requests with VM lifetimes partially falling outside the 14-day period are not included in the evaluation because we do not have the complete workload before and after the 14-day period. After the aforesaid records are removed, there are a total of 4,551,058 requests (about 82\%) left whose VM lifetimes completely fall in the 14-day period. Figure \ref{fig:dist} shows the distribution of these VM lifetimes and Figure \ref{fig:qqplot} presents the normal quantile-quantile plot for the logarithm of VM lifetimes, which shows that the VM lifetimes almost follow a log-normal distribution.

\begin{figure}[t]
\centering
\begin{subfigure}{\linewidth}
  \centering
  \includegraphics[width=0.8\linewidth]{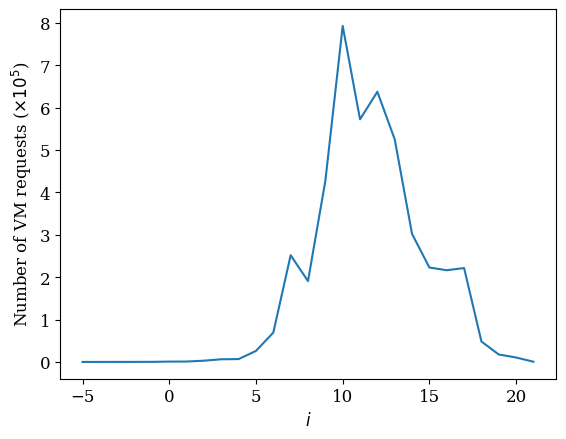}
  \caption{Number of VM requests with lifetime in $[2^{i-1},2^i)$ seconds}
  \label{fig:dist}
\end{subfigure}

\begin{subfigure}{\linewidth}
  \centering
  \includegraphics[width=0.8\linewidth]{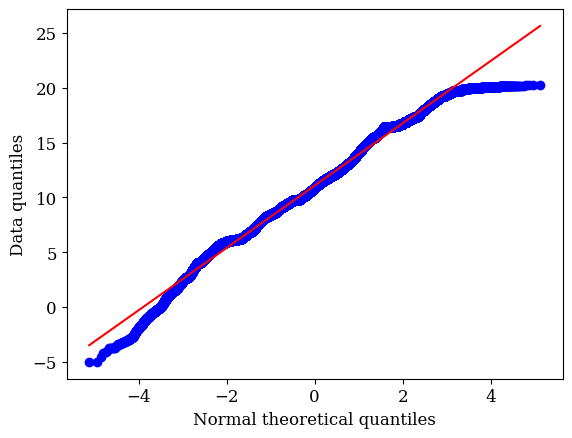}
  \caption{Q-Q plot of the log lifetime of all VM requests}
  \label{fig:qqplot}
\end{subfigure}
\caption{Exploratory analysis of VM lifetimes}
\label{fig:subplot2}
\end{figure}

    \textbf{Instances.} Our evaluation assumes that there is a sufficiently large number of physical machines available and all the physical machines are of the same type with the same resource capacity. Thus, 35 possible \textit{MinUsageTime} DVBP instances can be created from the dataset, one for each PM type. Given a PM type, the VM requests for VM types compatible with the PM type are extracted to form the input to the instance. Among 35 PM types, 2 of them do not have any VM requests because there is no request for compatible VM types. In addition, some VM types are exact duplications of other VM types in terms of compatibility and resource usage with respect to PM types. Therefore, some instances created are exactly the same. After removing the two empty instances and the redundant instances (there are 4 of them), we end up having 29 distinct instances. We further exclude one particular instance in our evaluation because it is trivial (due to the low number and resource demands of VM requests in this instance, all the active VMs can fit into one PM at any time).

    \textbf{Resource Dimensions.} In the VMType records, there are five resource dimensions ($d = 5$): core, memory, HDD, SSD and NIC. Some VMType records have null values or zero values in the HDD field. We assume that null values mean zero (i.e., the VM does not use any resource of the corresponding dimension). For some PM types, all the VM types compatible with the PM type have zero/null values in the HDD field. As a result, the resource dimension of HDD does not affect VM placement and thus can be ignored (the PM type probably does not have HDD resources at all). Hence, in such cases, only four resource dimensions ($d = 4$) are considered in our evaluation: core, memory, SSD and NIC. 
    
We run different packing algorithms for each of the 28 instances to obtain the total usage time of all bins. We compute the ratio of the total usage time produced by an algorithm over the best-known lower bound of the optimal solution as defined in Equation \eqref{eq:lowerbound}. The ratio is intended to capture the normalized performance of an algorithm compared with the optimal solution for a particular instance. We refer to it as the \textit{performance ratio}. The lower bound is used instead of the exact optimal solution because \textit{MinUsageTime} DVBP is NP-hard, and it is computationally expensive to derive the exact optimal solution. Obviously, the actual ratio between the total bin usage times of an algorithm and the optimal solution is never higher than our performance ratio. We present the box plot of the performance ratios to examine the performance of an algorithm across all instances. We use the common box plot showing the first quartile (25th percentile) and the third quartile (75th percentile) as a box, and the boundaries of the whiskers based on 1.5 times the interquartile range (which is defined as the distance between the third and first quartiles). 

Huawei Cloud also releases a VM placement dataset \cite{sheng2021vmagent} (\url{https://github.com/huaweicloud/VM-placement-dataset}) that has a schema largely similar to the Microsoft Azure dataset but has a much smaller size (116,313 VM requests after cleaning and only two resource dimensions: CPU and memory). The evaluation results using the Microsoft Azure and Huawei Cloud datasets show similar trends. Due to space limitations, we shall focus on presenting the results of the Microsoft Azure dataset in this paper. The results of the Huawei Cloud dataset are provided in Appendix \ref{app:huawei}.

%% file: content/2.1.nclairAlgo.tex
\section{Non-clairvoyant Algorithms}
\label{sec:nonclairvoyant}
This section introduces and evaluates algorithms in the non-clairvoyant setting -- the departure time of each item is not known at its arrival and hence cannot be used in making packing decisions. Since the packing algorithm is blind to item departure times, it cannot avoid mixing items with short and long durations in the same bin. Indeed, it has been proved that no deterministic packing algorithm can achieve a competitive ratio lower than $\mu$ (the max/min item duration ratio) in the single-dimensional case \cite{dbp2014}. 

In the packing process, a bin is said \textit{opened} when it receives the first item. When all the items in a bin depart, the bin is said \textit{closed}. Any Fit is an important feature of many online packing algorithms \cite{dbp2014}. It means that the algorithm never opens a new bin for an incoming item if the item can fit into any current open bin. Any Fit is generally good for \textit{MinUsageTime} DVBP because item placement will always try to make use of open bins that do not incur additional bin usage time immediately.
Theoretically, it has been shown that no Any Fit packing algorithm can achieve a competitive ratio lower than $(\mu+1)d$ in the $d$-dimensional case \cite{murhekar2023dynamic}.

\subsection{First Fit and Most Recently Used} 
First Fit \cite{dbp2014,dbp2017} and Most Recently Used (MRU) \cite{nextfit, murhekar2023dynamic} are Any Fit algorithms. These algorithms maintain different orderings of the open bins. Whenever an item arrives, the algorithm goes through the particular order of the open bins and places the item into the first bin that has enough capacity available for it in all resource dimensions. 

First Fit arranges the bins in ascending order of open time, so an incoming item is placed in the earliest-opened bin. Most Recently Used arranges the bins in descending order of last access time, where a bin is said accessed when a new item is placed in the bin. In this way, receiving a new item moves the bin to the front of the ordering -- (some literature named this algorithm Move to Front \cite{nextfit, murhekar2023dynamic}). First Fit can achieve a competitive ratio of $(\mu + 2)d + 1$ \cite{dbp2017, murhekar2023dynamic}, which in the single-dimensional case is pretty close to the lower bound of $\mu$ on the competitiveness of any deterministic algorithm \cite{dbp2014}. MRU can achieve a competitive ratio of $(2\mu + 1)d + 1$ \cite{murhekar2023dynamic}.

\subsection{Next Fit and Round Robin Next Fit}
In Next Fit \cite{nextfit}, only one open bin accepts new item placement at any time. Once an incoming item cannot fit into this bin, this bin will no longer accept any further item, and a new bin will be opened to receive the incoming item and subsequently arriving items. 

Next Fit originates in a static setting where items never depart. The main feature of Next Fit is to prevent mixing items arriving far apart in the same bin. In a dynamic setting, Next Fit is not likely to perform very well because all the open bins except the most recently opened one are not allowed to receive new items. An earlier-opened bin may be able to accommodate new items after some items placed in it depart. Next Fit is not an Any Fit algorithm as a new bin may be opened for an incoming item even if the item can fit into an earlier-opened bin. We propose Round-Robin Next Fit as a new algorithm that converts the concept of Next Fit into an Any Fit algorithm, which is more suitable for a dynamic setting. When an incoming item cannot fit into the open bin currently under consideration, instead of opening a new bin, we can loop around to search for any possible open bin that can accommodate the item. That is, we maintain the open bins in ascending order of open time (same as First Fit) and keep a flag indicating the open bin currently under consideration. For each incoming item, the flag initially points to the bin receiving the last arrived item and the flag will be moved around in a round-robin manner to search for an open bin for placing the incoming item. Only when no open bin can accommodate the incoming item will a new bin be opened. Next Fit achieves a competitive ratio of $2\mu d + 1$ \cite{nextfit, murhekar2023dynamic}. We prove that Round-Robin Next Fit achieves a competitive ratio of $(2\mu+1) d+1$. The main idea is as follows. The usage time of each bin can be partitioned into alternating periods of receiving items (the flag points to the bin) and not receiving items (the flag does not point to the bin). At any time, there is only one bin in the receiving period. The non-receiving period of any other open bin $b$ is initiated by an item $r$ placed into another bin (i.e., $r$ causes the flag to move from $b$ to the latter bin). By the algorithm definition, the total size of items in the non-receiving period of $b$ and the item $r$ must exceed the bin capacity. If we duplicate the item $r$ and extend its duration by $\mu$, and extend the durations of the items in $b$ by $\mu + 1$, all these items would be active throughout the non-receiving period of $b$. Hence, the number of open bins is bounded by the total size of active items in the original and duplicate sets, which in turn is bounded by the optimal packing solution. The full proof is deferred to Appendix \ref{sec:rrnextfitanalysis}. We also construct an adversary example to show that the competitive ratio of Round Robin Next Fit is at least $2 \mu d$ in Appendix \ref{sec:rrnextfitanalysis}.

\subsection{Best Fit} \label{sec:bestfit}

The guiding principle of Best Fit \cite{dbp2014} is to reduce the wastage of bin capacity. Among all open bins, an incoming item will be placed into the open bin whose available capacity ``best'' fits the size of the item, i.e., the open bin that will leave the least remaining capacity after accepting the item. In the single-dimensional case, this simply means to place the item into the most-loaded open bin. In the multi-dimensional case, however, the item size and bin capacity are described by vectors due to multiple resource types. Hence, there are many different ways to define what is the ``least'' vector. We adopt the definition of $\ell_p$-norm to characterize vectors. Let $\vec{a} = \langle a_1, a_2, \dots, a_d \rangle$ be the size of an incoming item, and $\vec{b} = \langle b_1, b_2, \dots, b_d \rangle$ be the available capacity of an open bin. Then, the remaining capacity of the bin after accepting the item is given by $\vec{b} - \vec{a} = \langle b_1 - a_1, b_2 - a_2, \dots, b_d - a_d \rangle$. We explore three common $\ell_p$-norms to compute the fit score: $\ell_1$-norm (sum of available capacity over all dimensions: $\sum_{1 \leq i \leq d}{ (b_i - a_i)}$), $\ell_2$-norm (square root of the sum of squares: $\sqrt{\sum_{1 \leq i \leq d}{(b_i - a_i) ^ 2}}$), and $\ell_\infty$-norm (maximum available capacity across all dimensions: $\max_{1 \leq i \leq d}{(b_i - a_i)}$). The incoming item will be placed in the open bin with the lowest fit score that can accommodate it. It has been proved that Best Fit cannot achieve any bounded competitive ratio, even in the single-dimensional case \cite{dbp2014}.

\textbf{Experimental Comparison:} Figure \ref{fig:bestworstbox} compares the performance ratios of different norms for Best Fit. In general, $\ell_\infty$-norm works the best for Best Fit. As seen from Figure~\ref{fig:bestworstbox}, $\ell_\infty$-norm produces the lowest mean performance ratio. A possible reason is that $\ell_\infty$-norm captures the dimension with the maximum available capacity and acts as an upper bound of the available capacity in other dimensions. As such, $\ell_\infty$-norm is a good representation of the available capacity of a bin. In what follows, we shall focus on using the $\ell_\infty$-norm in our evaluation. 

\begin{figure}
\centering
\includegraphics[width=0.31\textwidth]{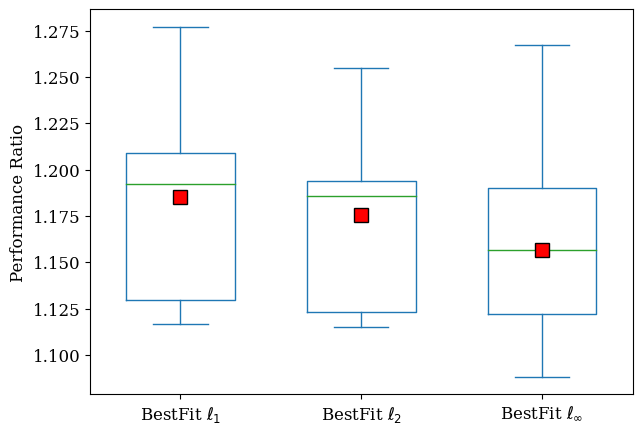}
\caption{\label{fig:bestworstbox} Performance of Best Fit}
\end{figure}

\subsection{Experimental Comparison in Non-clairvoyant Setting}

Figure \ref{fig:nclairBox} presents the box plot for various algorithms in the non-clairvoyant setting. To facilitate the comparison among better-performing algorithms, we limit the cap of the vertical axis (performance ratio) to $2.5$, which cuts off the box/outliers for some algorithms.\footnote{Similar caps on the vertical axis are also imposed in subsequent figures.} We could draw the following insights:

    \textbf{Best Performer}: The First Fit algorithm produces the lowest mean performance ratio among all the algorithms tested, which concurs with that it has the lowest competitive ratio. 

    \textbf{Any Fit Feature}: It is beneficial to enhance a non-clairvoyant algorithm with the Any Fit feature. Round Robin Next Fit considerably improves the performance ratio over Next Fit, though they have very similar competitive ratios. This shows that by utilizing open bins as much as possible, the Any Fit feature can help improve the packing efficiency dramatically.
    
    \textbf{Best Fit and MRU:} Best Fit performs close to First Fit, because minimizing the leftover capacity in the open bins makes item placement concentrate in fewer bins. Most Recently Used (MRU) performs close to Round Robin Next Fit. We note that earlier studies \cite{murhekar2023dynamic, greedy} showed that MRU is empirically the best in their evaluations with synthetic data, but this is not the case in our evaluation with real-world data of VM placement. This could be due to different distributions of item sizes and durations (they were generated from uniform distributions in \cite{murhekar2023dynamic} and \cite{greedy}).

\begin{figure}
\centering
\includegraphics[width=.47\textwidth]{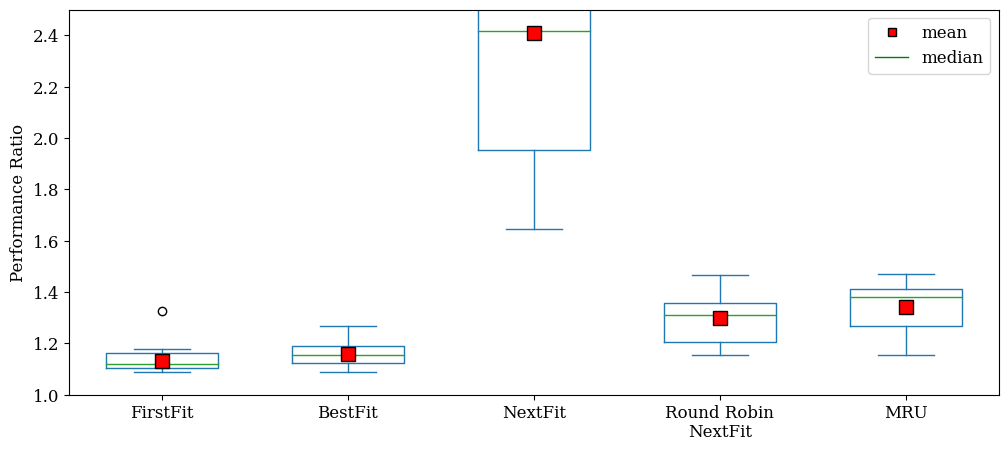}
\caption{\label{fig:nclairBox} Performance of non-clairvoyant algorithms}
\end{figure}

%% file: content/2.2clairAlgo.tex
\section{Clairvoyant Algorithms}
\label{sec:clairvoyant}
This section presents and evaluates algorithms in the clairvoyant setting -- the departure time (or equivalently the duration) of each item is known upon its arrival. Algorithms in the clairvoyant setting usually utilize the departure time or item duration information to categorize the items and bins. The items of each category are placed in the bins of the same category in a First Fit manner unless stated otherwise. Note, however, that algorithms of this paradigm are not Any Fit algorithms if there are multiple categories.
This is nevertheless necessary for packing algorithms to achieve competitive ratios of $o(\mu d)$, since the lower bound of $(\mu+1)d$ on the competitiveness of Any Fit algorithms still applies in the clairvoyant setting. In what follows, we introduce the details of each algorithm and its intuition. We start with algorithms categorizing items based on departure times, followed by algorithms conducting categorization based on item durations. 

\subsection{Classify By Departure Time}

Wastage of bin capacity is likely to occur when some items in a bin depart while other items in the bin do not depart. With the knowledge of item departure times, an intuitive approach to reduce the bin usage time is to place items with similar departure times in the same bin \cite{clairdvbp2016}. In this way, the items in a bin would depart at around the same time, after which the bin can be closed timely. As a result, the bin usage time can be saved. The Classify-By-Departure-Time algorithm operates with a parameter $\rho$ defined as the width of the departure time range. It partitions the time horizon into disjoint intervals of length $\rho$ and puts all the items with departure times in each interval into one distinct category. The algorithm can achieve a competitive ratio of $O(\sqrt{\mu})$ in the single-dimensional case by setting $\rho$ to the geometric mean of the maximum and minimum item durations if they are known upfront \cite{clairdvbp2016}. There is no competitive analysis for the multi-dimensional case in the existing literature. Therefore, we conduct experiments for different $\rho$ values.

\textbf{Experimental Comparison}: As seen from the box plot in Figure \ref{fig:depart}, the median and mean of performance ratios exhibit a U-shaped trend. When $\rho$ is small, the algorithm creates many categories, which potentially degrades the packing efficiency because items of different categories are packed into separate bins. On the other hand, when $\rho$ is large, the algorithm creates fewer categories, but it degenerates towards the First Fit algorithm which does not exploit any departure time information. As such, there is a trade-off between overusing and underusing the departure time information of items in the packing. For the Microsoft Azure dataset, setting $\rho$ to $0.25$ day appears to produce the best performance ratios. Thus, we shall use this $\rho$ setting in subsequent comparisons. 

\begin{figure}
\centering
\includegraphics[width=0.48\textwidth]{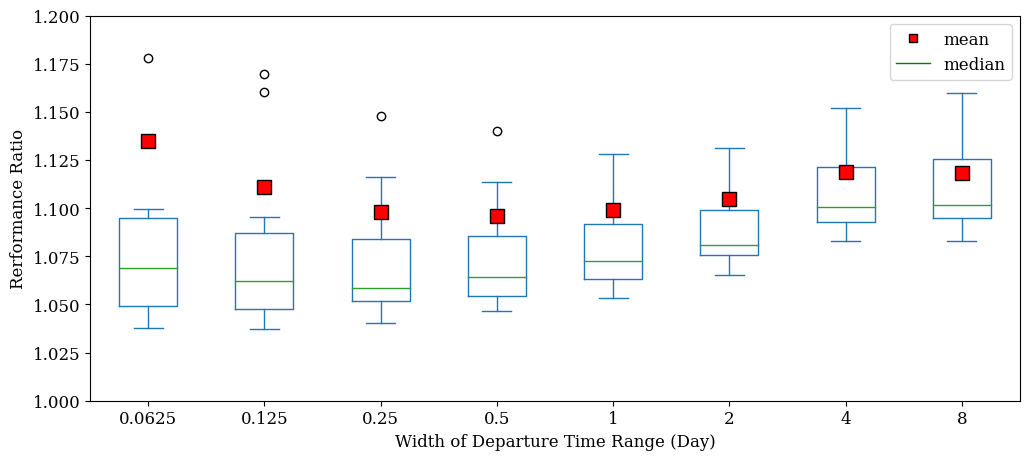}
\caption{\label{fig:depart} Performance of Classify-By-Departure-Time}
\end{figure}

\subsection{Nearest Remaining Time (NRT)}

We propose an algorithm called Nearest Remaining Time that extends the idea of classification by departure time. The central intuition is again that items with similar departure times should be placed in the same bin. However, Classify-By-Departure-Time is not an Any Fit algorithm. To improve the algorithm design, we integrate the Any Fit concept by dynamically defining the departure time range. Note that the latest departure time of all items currently placed in a bin indicates the bin closing time (referred to as the \textit{indicated closing time} hereafter). Among all the open bins with enough available capacity, we can place an incoming item into the bin that has the indicated closing time nearest to the departure time of the item. From an alternative perspective, if we arrange the bins in ascending order of indicated closing time, this strategy implies that each bin will receive items with departure times in the range from the midpoint between the previous bin and this bin to the midpoint between this bin and the next bin. Therefore, it dynamically defines the range of accepting departure times for each bin. 

To enhance the algorithm design, we can further distinguish between case (a) where the bin’s indicated closing time is no earlier than the item departure time and case (b) where the bin’s indicated closing time is earlier than the item departure time. In case (a), placing the item into the bin does not increase the bin usage time, whereas in case (b), it extends the bin's indicated closing time and hence increases the bin usage time. Thus, to reduce the bin usage time, priorities can be given to the bins of case (a) over the bins of case (b) in item placement. Therefore, we evaluate two versions of the Nearest Remaining Time (NRT) strategy as follows:
\begin{itemize}
    \item \textbf{Standard NRT} -- Consider case (a) and case (b) together and choose the bin with indicated closing time nearest to the departure time of an item for placement. 
    \item \textbf{Prioritized NRT} -- An item is placed in a bin of case (b) only if the item cannot fit into any bin of case (a). For each case, choose the bin with indicated closing time nearest to the item departure time for placement.
\end{itemize}

Standard NRT does not have a bounded competitive ratio. In fact, an adversary can force Standard NRT to maintain many open bins indefinitely, whereas the optimal packing solution needs just one bin. First, the adversary releases $n$ items each of size $\frac{1}{2} + \epsilon$ ($\epsilon > 0$) to open $n$ bins $b_1, b_2, \dots, b_n$ with increasing indicated closing times forming an arithmetic sequence. Then, the adversary releases items in rounds. In each round, the adversary releases $n$ items each of size $\epsilon$, where the departure time of the $i$-th item falls between the indicated closing times of $b_{n-i+1}$ and $b_{n-i}$ and is slightly closer to the former than the latter. Standard NRT will place the $i$-th item in bin $b_{n-i+1}$ and extend $b_{n-i+1}$'s indicated closing time. After each round, the indicated closing times of the bins again form an arithmetic sequence. As $\epsilon \rightarrow 0$, all the active items can be placed into just one bin after the initial items depart. Hence, an optimal packing solution will only need one bin to accommodate all the items. Thus, the competitive ratio is not bounded as $n \rightarrow \infty$.

We prove that Prioritized NRT achieves a competitive ratio of $(\mu + 2)d + 1$, which is rather close to the lower bound of $(\mu + 1)d$ on the competitiveness of Any Fit algorithms. The main idea is as follows. At any time $t$, we sort the open bins by the earliest time instant when the bin has its indicated closing time greater than $t$ due to a new item placed in it (call it the triggering item). By the algorithm definition, for any $j$, the total size of active items in the $(j-1)$-th bin and the triggering item of the $j$-th bin must exceed the bin capacity. If we duplicate the items and extend their durations by $\mu + 1$, all these items would be active at time $t$. Hence, the number of open bins at $t$ is bounded by the total size of items active at $t$ in the original and duplicate sets, which in turn is bounded by the optimal packing solution. The full proof is deferred to Appendix \ref{sec:greedyandnrtanalysis} due to space limitations.

\textbf{Experimental Comparison}: Figure \ref{fig:CRTbox} compares the performance ratios of Standard NRT and Prioritized NRT. It can be seen that Prioritized NRT works significantly better than Standard NRT. Though they both have the Any Fit feature, Prioritized NRT prevents the bin closing time from increasing as far as possible and thus avoids additional bin usage time. Therefore, we shall focus on Prioritized NRT in our evaluation. 

\begin{figure}
\centering
\includegraphics[width=.32\textwidth]{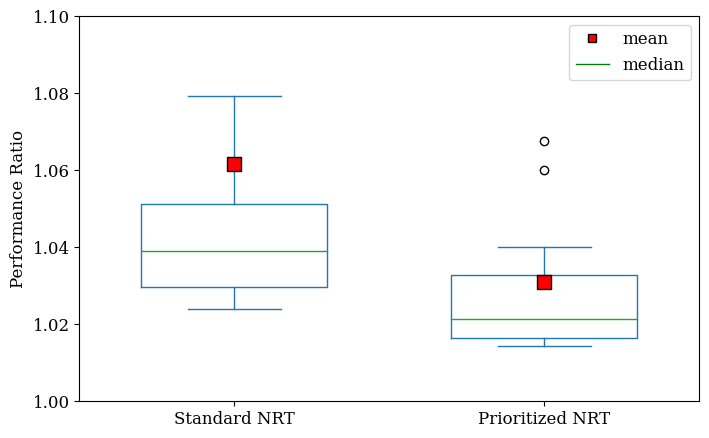}
\caption{\label{fig:CRTbox} Performance of NRT} 
\end{figure}

\subsection{Greedy}
The Greedy algorithm \cite{greedy} also makes placement decisions based on the relation between the bin's indicated closing time and the item's departure time. Different from NRT, it places an incoming item into the bin with the latest closing time, among all the open bins that can accommodate the item. If the item's departure time is later than the indicated closing time of the chosen bin, the bin will have its indicated closing time extended to the item's departure time after placement. Li \textit{et al.} \cite{greedy} proved that the Greedy algorithm can achieve a competitive ratio of $3\mu d + 1$ in the $d$-dimensional case. We further improve its competitive analysis in Appendix \ref{sec:greedyandnrtanalysis} and show that the Greedy algorithm achieves a competitive ratio of $(\mu + 2)d + 1$, following the same proof idea as Prioritized NRT.

\subsection{Classify By Duration}
Another strategy is to place items with similar durations in the same bin \cite{clairdvbp2016}. This is motivated by the observation that in the non-clairvoyant setting, the competitiveness of online packing is limited by the variation of item durations. By reducing the variation of item durations in the same bin, it is possible to improve the performance of the packing algorithm. Specifically, items are classified into categories such that the max/min item duration ratio for each category is a given constant $\beta$. Each category $i$ includes all the items with durations in the range $[\beta^{i-1}, \beta^i)$ for an integer $i$ (where $i$ can be negative, zero or positive). If $\mu$ is not known in advance, the Classify-By-Duration algorithm achieves a competitive ratio of $O(\log\mu)$ in the single-dimensional case \cite{clairdvbp2016}.

\textbf{Experimental Comparison}: We test different $\beta$ values. As shown by the box plot in Figure \ref{fig:duration}, the median and mean of performance ratios generally have a U-shaped trend. Similar to Classify-By-Departure-Time, when $\beta$ is small, the algorithm creates many categories, which potentially downgrades the packing efficiency; when $\beta$ is large, the algorithm creates fewer categories and approaches the First Fit algorithm. For the Microsoft Azure dataset, it appears that when $\beta = 2$, the median performance ratio is the lowest. Hence, in what follows, we shall set $\beta = 2$ in our evaluation. 

\begin{figure}
\centering
\includegraphics[width=0.48\textwidth]{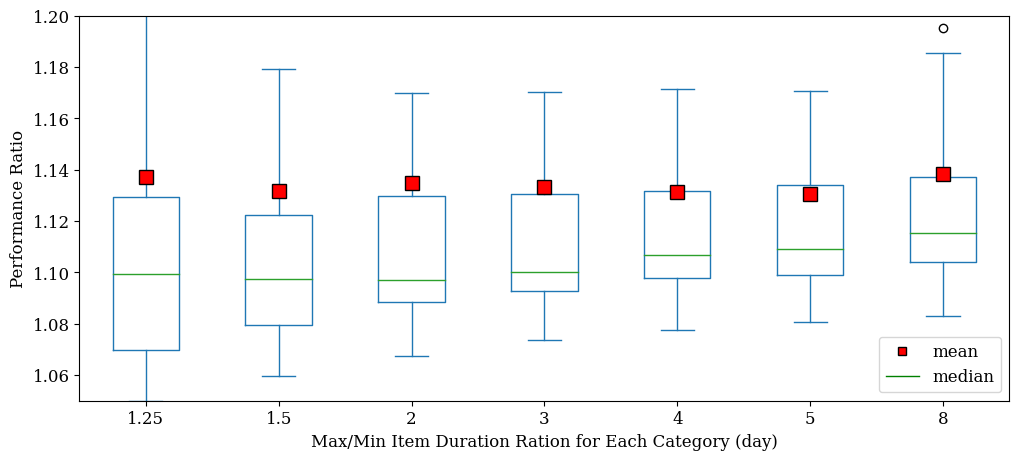}
\caption{\label{fig:duration} Performance of Classify-By-Duration}
\end{figure}

\subsection{Hybrid Algorithm}

The Hybrid algorithm (HA) \cite{clairdvbp2019} combines the First Fit and Classify By Duration strategies. It categorizes items based on their durations and arrival times. Each category $(i,c)$ includes all the items with durations in the range $[2^{i-1}, 2^i)$ for an integer $i$ and arrival times falling in the interval $[(c - 1) \cdot 2^i, c \cdot 2^i)$ for an integer $c$. The Hybrid algorithm defines two types of bins -- general bins and category bins. General bins can receive items of any categories, while each category bin can receive items of a particular category only. Items of each category $(i,c)$ are placed into the general bins when their total size in the general bins is no larger than a threshold $\frac{1}{2\sqrt{i}}$, and placed into the category bins when their total size in the general bins exceeds the threshold. The First Fit strategy is adopted for item placement among the general bins as well as among the category bins of the same category. The intuition behind the Hybrid algorithm is to implement the Classify By Duration strategy only when there is sufficient load of each category and prevent items from being unnecessarily distributed to many separate bins. Azar and Vainstein \cite{clairdvbp2019} proved that the Hybrid algorithm achieves a competitive ratio of $O(\sqrt{\log{\mu}})$, which is asymptotically optimal (best achievable) in the clairvoyant setting. The original Hybrid algorithm is for the single-dimensional case. To adapt it to the multi-dimensional case, we define the total size of active category-$(i,c)$ items in the general bins as the $\ell_{\infty}$-norm of their aggregate size vector, i.e., $ \norminf{\sum_{r \in \mathcal{R}^*}{s(r)}}$, where $\mathcal{R}^*$ is the set of active category-$(i,c)$ items placed in the general bins. We prove that such adaptation allows the Hybrid algorithm to achieve a competitive ratio of $O(d\sqrt{\log\mu})$. Due to space limitations, we defer the full proof to Appendix \ref{sec:reducedhybridanalysis}. We also note that the original Hybrid algorithm assumes the minimum item duration is at least $1$, so that $i$ in the category index is always a positive integer (hence $\sqrt{i}$ is well-defined). In our implementation, for each instance created from the dataset, we find the minimum item duration and the power-of-2 range $[2^{z-1}, 2^z)$ including this minimum item duration. We set $i=1$ for this range $[2^{z-1}, 2^z)$ and set $i=j-z+1$ for each subsequent range $[2^{j-1}, 2^j)$ where $j>z$ accordingly. 

Recently, Liu and Tang \cite{predictiondvbp2022} found that item categorization can be simplified while maintaining the asymptotic optimal $O(\sqrt{\log{\mu}})$ competitiveness of packing. It is sufficient to use a more coarse-grained item categorization: define categories based on item durations only. That is, each category $i$ includes all the items with durations in the range $[2^{i-1}, 2^i)$ for an integer $i$. This can significantly reduce the number of categories. The concepts of general bins and category-specific bins remain unchanged in the Hybrid framework. We also include this Reduced Hybrid algorithm in our evaluation. Again, we adapt it to the multi-dimensional case by using the $\ell_{\infty}$-norm to characterize the total size of active category-$i$ items in the general bins. We prove that the Reduced Hybrid algorithm also achieves a competitive ratio of $O(d\sqrt{\log\mu})$ in Appendix~\ref{sec:reducedhybridanalysis}.

Very recently, Li \textit{et al.} \cite{greedy} proposed a technique called \textit{direct-sum} to extend a packing algorithm in the single-dimensional case to the multi-dimensional case. The idea is to divide all items into $d$ classes based on the dimension in which the item has the largest size. Then, the packing algorithm is applied to each class of items separately, since the packing needs to look at only the dimension of largest item sizes. Li \textit{et al.} \cite{greedy} proved that such adaptation introduces a multiplicative factor of $d$ (the number of dimensions) to the competitive ratio. We also evaluate the Hybrid and Reduced Hybrid algorithms adapted using the direct-sum technique to the $d$-dimensional case, which also have a competitive ratio of $O(d\sqrt{\log\mu})$.

\begin{figure}
\centering
\includegraphics[width=.3\textwidth]{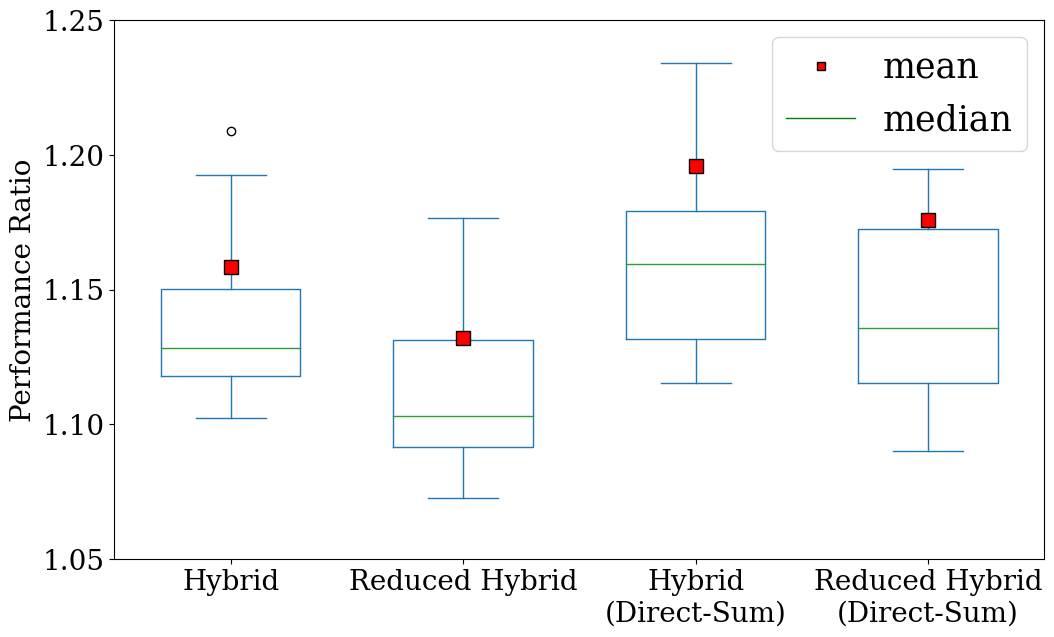}
\caption{\label{fig:HAbox} Performance of Hybrid algorithms}
\end{figure}

\textbf{Experimental Comparison}: Figure \ref{fig:HAbox} shows the experimental results. Comparing the left two algorithms, Reduced Hybrid considerably improves the packing efficiency over Hybrid. This is because eliminating the factor of item arrival times cuts down the number of categories which in turn leads to better utilization of bin capacity. Comparing our $\ell_\infty$-norm characterization (left two algorithms) with the direct-sum adaptation (right two algorithms), our adaptation is much more effective than direct-sum. This is because direct-sum essentially further increases the number of item categories by a factor of $d$, which downgrades the packing efficiency. In what follows, we shall focus on the Reduced Hybrid algorithm using our $\ell_\infty$-norm characterization in the evaluation. 

\subsection{Experimental Comparison in Clairvoyant Setting}

Figure \ref{fig:clairBox} presents the box plot for various algorithms in the clairvoyant setting. We could draw the following insights:
    
    \textbf{Best Performer}: Prioritized NRT performs very well and better than all the other algorithms. This demonstrates the effectiveness of matching item departure times with bin closing times and prioritizing bins that do not need to extend their closing times upon accepting items. 
        
    \textbf{Departure Time vs. Item Duration Information}: Classify-By-Departure-Time, Prioritized NRT and Greedy utilize the departure time information of items in the placement, whereas Classify-By-Duration and Reduced Hybrid utilize the item duration information in the placement. From Figure \ref{fig:clairBox}, algorithms using departure time information tend to perform better than algorithms using item duration information. While categorizing items based on duration can optimize the asymptotic competitive ratios from the theoretical perspective, it is practically less successful than leveraging the departure time information in reducing the total bin usage time. Aligning the departure times of items in the same bin allows the bin to be closed timely, thereby reducing the wastage of bin capacity effectively. 
        
    \textbf{Any Fit Feature}: Prioritized NRT and Greedy are Any Fit algorithms and they perform better than others which are not Any Fit algorithms. Even though Any Fit algorithms are never better than $O(\mu d)$ in competitiveness, empirically Any Fit is still promising. A good algorithm design is to integrate Any Fit with the departure time information of items. 
    
\begin{figure}
\centering
\includegraphics[width=.43\textwidth]{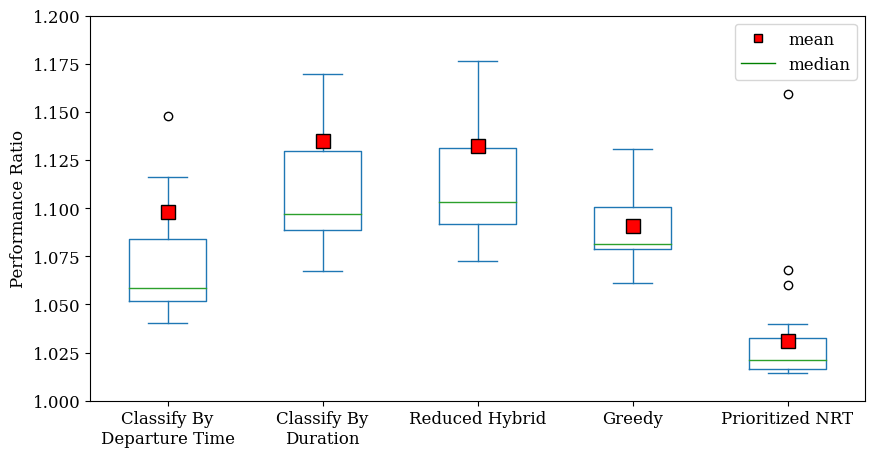}
\caption{Performance of clairvoyant algorithms}
\label{fig:clairBox}
\end{figure}

%% file: content/2.3learningaug.tex
\section{Learning-augmented Algorithms}
\label{sec:learningaugmented}
In practical applications, it is usually difficult to know the exact item durations (such as VM lifetimes or job lengths) at item arrivals. Item durations often need to be predicted based on history by using machine learning techniques. The prediction accuracy, however, varies widely with techniques and environments. Hence, it will be important to evaluate whether the packing efficiency is resilient to prediction errors. Making use of predictions to enhance online algorithm design has led to an arising research area called learning-augmented algorithms \cite{cacm2022}. This section presents and evaluates algorithms in the learning-augmented setting -- a predicted duration of each item is known upon its arrival, but it may not always be the same as the real duration of the item and can be arbitrarily different from the latter in the worst case. The goal is to make wise use of the predictions to improve the algorithm’s performance. 

\subsection{Robust and Consistent Packing (RCP) and Packing with Prediction Error (PPE)}

Liu and Tang \cite{predictiondvbp2022} developed a learning-augmented algorithm called Robust and Consistent Packing (RCP). It extends the Hybrid framework by defining four types of bins (large bins, general bins, base bin, category bins) and dynamically turning item categories on and off. The major extensions are as follows: 
\begin{itemize}
    \item Items are categorized according to their predicted durations rather than real durations;
    \item Each large item (of size larger than $1/2$) is placed in a separate large bin.\footnote{A ``large bin'' means a bin for placing large items only. It does not mean that the bin has a larger capacity than others.} 
    \item The threshold setting is simplified by applying a uniform and adjustable threshold value $1/{\sqrt{x}}$ for all the categories, where $x$ is the number of distinct categories seen by the algorithm so far (which typically grows as time goes on).
    \item Small items (of size no larger than $1/2$) go to the general bins first. For each category, when the total size of active items in the general bins exceeds the threshold $1/{\sqrt{x}}$, incoming items do not go to the category bins immediately. Instead, such incoming items of all categories go to a base bin. The base bin is a logical concept. At any time, there is at most one base bin. When the total size of items in the base bin exceeds $1/2$, the base bin is converted to a category bin of the category of a selected item in the bin, turning on that category. 
    \item When a category is on, incoming items of the category (exceeding the threshold) go to the category bins. When the total size of items in the category bins falls low, the category is turned off, after which incoming items of the category (exceeding the threshold) go to the base bin again. 
\end{itemize}
The base bin is designed to guarantee the resilience of the algorithm against terrible predictions. In the worst case, all incoming items can be very small in size and belong to different categories due to mispredictions. Directly applying the Hybrid framework will open many bins concurrently. In this case, the base bin is used to receive all items and avoid opening unnecessary bins. Liu and Tang \cite{predictiondvbp2022} proved that the RCP algorithm achieves $O(\mu)$ consistency (competitive ratio under perfect predictions) and $O(\sqrt{\log \mu})$ robustness (competitive ratio under terrible predictions), matching the asymptotic optimal competitive ratios in the clairvoyant and non-clairvoyant settings respectively. 

More recently, Liu and Tang \cite{predictiondvbp2024} further refined  the RCP algorithm and named it Packing with Prediction Error (PPE). The main idea is to monitor the prediction errors of departed items (note that the real duration of an item is revealed when it departs) and adjust the threshold value accordingly. Specifically, the threshold value is set to $\alpha/\sqrt{x}$, where $\alpha$ is an online estimate of the maximum prediction error using a guess-and-double approach. As a result, PPE achieves an asymptotic optimal competitive ratio of $O\big(\min\big\{\max\{\epsilon\cdot\sqrt{\log \mu}, \epsilon^2\}, \mu\big\}\big)$ over the entire spectrum of prediction errors, where $\epsilon$ is the maximum multiplicative prediction error among all items (the multiplicative prediction error is the ratio between the real duration and the predicted duration). We include both RCP and PPE in our evaluation. Similar to the Hybrid algorithm, we adapt RCP and PPE to the multi-dimensional case by using the $\ell_\infty$-norm to characterize the total size of a set of items.

In addition, we find that placing each large item (of size larger than $1/2$) in a separate large bin is not really necessary for RCP and PPE to achieve asymptotic optimal competitive ratios. Intuitively, these large bins could result in significant wastage of bin capacity, since each bin hosts one item only. In fact, small items may still be placed in large bins to improve the bin utilization. Therefore, we also propose variants of RCP and PPE which eliminate the concept of large bins. Specifically, we modify the algorithms as follows:
\begin{itemize}
    \item All items (irrespective of large or small) go to the general bins first, when the total size of active items in the same category does not exceed the threshold $1/{\sqrt{x}}$ in the general bins.
    \item For each incoming large item (exceeding the threshold), if the category of the item is on, the item goes to the category bins. 
    \item Otherwise, if the item can fit into the remaining capacity of the base bin, it goes to the base bin. After receiving the large item, the base bin is converted to a category bin, since the total size of items in it exceeds $1/2$. 
    \item Finally, if the item cannot fit into the remaining capacity of the base bin, a category bin is opened for the item, and as a result, the category of the item is turned on immediately. 
\end{itemize}

\begin{figure*}
\centering
\begin{minipage}{.47\textwidth}
\centering
\includegraphics[width=\linewidth]{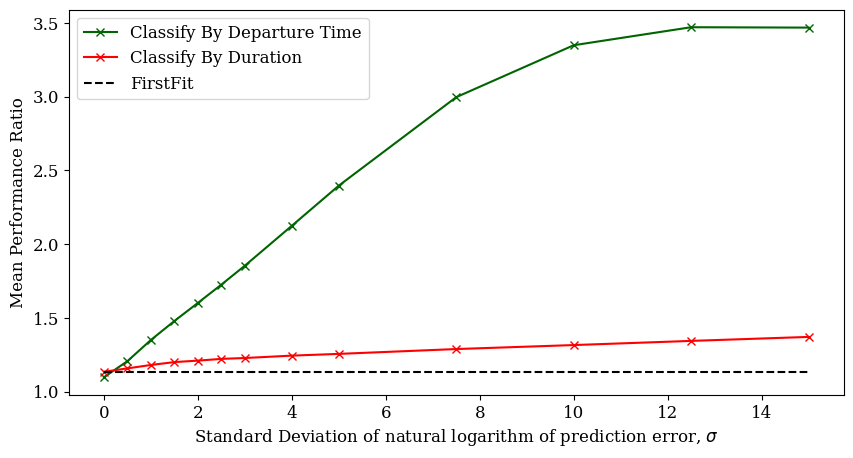}
\caption{\label{fig:errorplot_CB} Classify-By-Departure-Time and Classify-By-Duration} 
\end{minipage}
\quad
\begin{minipage}{.48\textwidth}
\centering
\includegraphics[width=\linewidth]{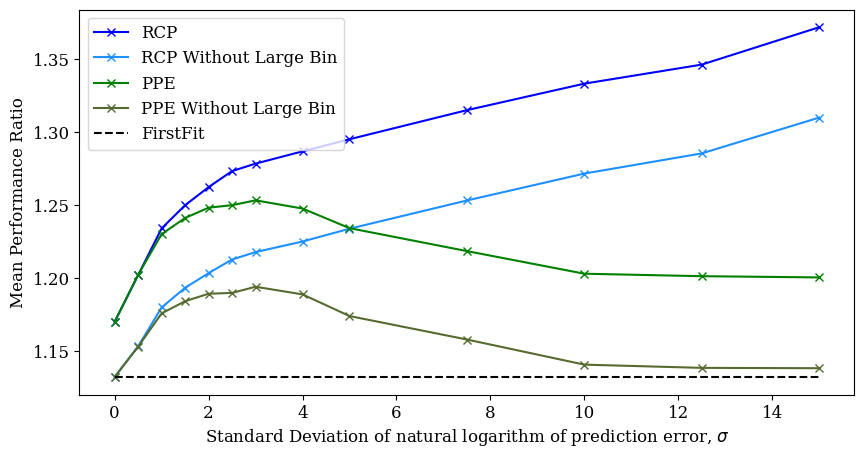}
\caption{\label{fig:errorplotRCP} RCP and PPE} 
\end{minipage}
\end{figure*}

\begin{figure*}
\centering
    \begin{minipage}{.48\textwidth}
    \centering
    \includegraphics[width=\linewidth]{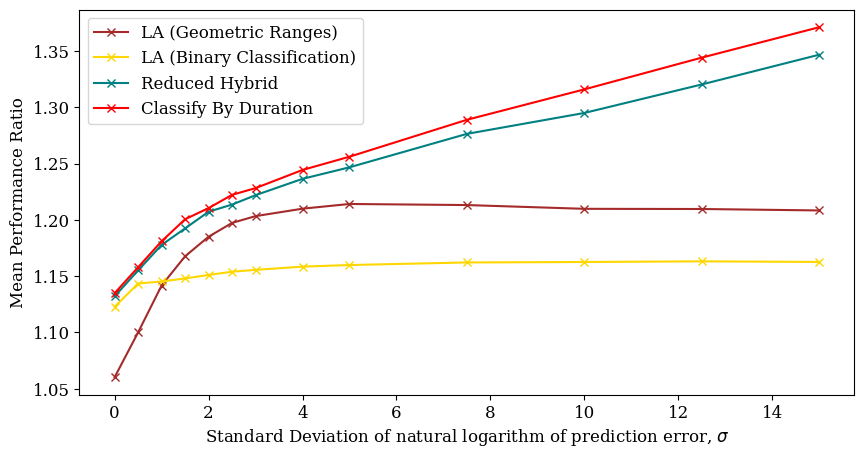}
    \caption{\label{fig:errorplotLA} Lifetime Alignment, Classify-By-Duration and Reduced Hybrid} 
    \end{minipage}
    \quad
    \begin{minipage}{.48\textwidth}
    \centering
        \includegraphics[width=\linewidth]{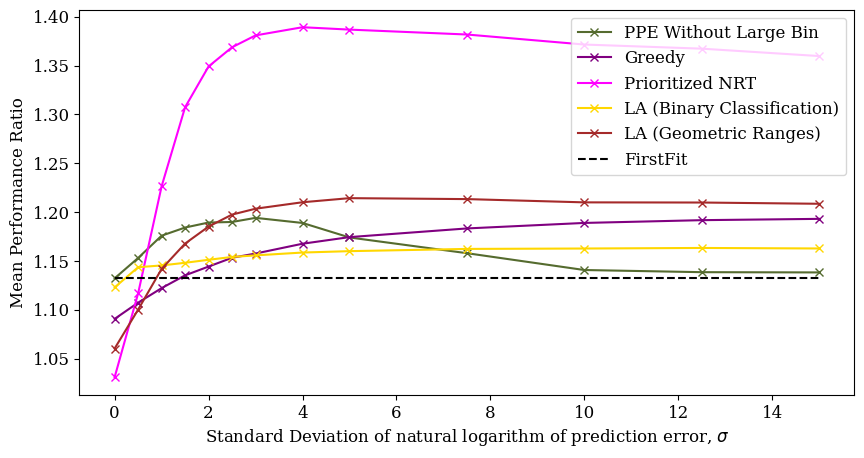}
        \caption{Mean performance ratios of selected algorithms}
        \label{fig:errorplot}   
    \end{minipage}
\end{figure*}

\subsection{Lifetime Alignment (LA)} 

Recently, Barbalho \textit{et al.} \cite{AzureML2023} designed a Lifetime Alignment (LA) algorithm for VM placement with lifetime predictions. It is built upon the Classify-By-Duration strategy, i.e., categorizing items (VMs) according to their predicted durations (lifetimes). The Lifetime Alignment algorithm introduces a new idea to dynamically classify bins (PMs) based on their predicted remaining usage times. The predicted remaining usage time of a bin is given by the duration from the current time to the latest predicted departure time of all items currently placed in the bin (or equivalently the longest predicted remaining duration of all items in the bin). This contrasts the static bin classification used by the Classify-By-Duration, Hybrid, RCP and PPE algorithms, where each bin is always associated with a fixed item category. By the aforesaid dynamic classification, a bin can be associated with different item categories as time goes on. The intuition is to prevent placing an item of long duration into a bin that is about to be closed (all items in it will depart soon, though they had long durations). Lifetime Alignment is an Any Fit algorithm. Let $X_0, X_1, X_2, \dots$ be disjoint ranges of durations for categorization, where $X_0$ contains the smallest values. For each category except $X_0$, an incoming item will first consider the open bins of the same category for placement and if unsuccessful, then consider other open bins. For category $X_0$ (items with shortest durations), an incoming item will consider all open bins together for placement, since these items can help filling the leftover capacity of any bins and do not impact their usage times. Among the open bins considered with enough available capacity, the Best Fit strategy is used for item placement. In our evaluation, we use the $\ell_{\infty}$-norm to compute the fit score (as discussed in Section \ref{sec:bestfit}). We evaluate two different ways of categorization:
\begin{itemize}
    \item \textbf{Binary classification}: There are only two categories: short-lived and long-lived, by setting $X_0 = [0, 120)$ minutes and $X_1 = [120, \infty)$ minutes following \cite{AzureML2023}. 
    \item \textbf{Geometric ranges}: Similar to the RCP and PPE algorithms, we set $X_0 = [0, 1)$ second, and $X_i = [2^{i - 1}, 2^i)$ seconds for each $i \geq 1$. 
\end{itemize}

Neither version of Lifetime Alignment has a bounded competitive ratio. In fact, when all the items are short items belonging to category $X_0$, Lifetime Alignment is reduced to Best Fit which has been proved not to have a bounded competitive ratio \cite{dvp2016ondemand,dbp2014}.

\subsection{Experimental Comparison in Learning-augmented Setting}\label{sec:experimentforlearning}

\textbf{Additional Experiment Setup}: To evaluate learning-augmented algorithms, we further generate a predicted duration of each item (a predicted lifetime of each VM requested) in the experiment setup. Recall that in a \textit{MinUsageTime} DVBP instance, each item $r$ is characterized by an active interval $I(r) = [I(r)^-, I(r)^+)$, where $I(r)^-$ and $I(r)^+$ are the arrival and departure times of the item respectively. The real duration of the item is given by $\mathrm{Rdur}(r) = I(r)^+ - I(r)^-$. We generate a predicted duration of the item by scaling the real duration $\mathrm{Rdur}(r)$ with a multiplicative prediction error. For each item, we randomly generate a multiplicative prediction error $\delta$ according to a log-normal distribution,\footnote{We have also experimented with other distributions of prediction errors and observed similar performance trends. The results using a uniform distribution of prediction errors are provided in Appendix \ref{app:uniform}.} where the mean and standard deviation of $\delta$'s natural logarithm are $\mu$ and $\sigma$ respectively. The log-normal distribution simulates rare but significant prediction failures of the learning algorithm. It was used in the experimentation of other work on learning-augmented algorithms \cite{lykouris2018icml}. On generating $\delta$, the predicted duration is set to $\mathrm{Pdur}(r) = \delta \cdot \mathrm{Rdur}(r)$, which means that the item is predicted to depart at time $I(r)^- + \mathrm{Pdur}(r)$. We set the mean $\mu = 0$ and vary the standard deviation $\sigma$ in the experiment. $\sigma = 0$ means the prediction error is always $e^\mu = e^0 = 1$, i.e., perfect predictions. For each setting, we repeat the experiment 5 times and plot the mean performance ratio.

Besides the learning-augmented algorithms introduced above, we also include key algorithms in the clairvoyant setting in the evaluation. Clairvoyant algorithms can be considered as learning-augmented algorithms that blindly take predictions as truths. That is, they operate based on predicted item durations or departure times directly. Note that the Nearest Remaining Time, Greedy and Lifetime Alignment algorithms exploit the bin's indicated closing time (the latest departure time of all items currently placed in a bin). If all the active items in a bin have their departure times underestimated, it may happen that the bin's indicated closing time based on predictions is earlier than the current time. In this case, we adapt the algorithms to simply take the current time as the bin's indicated closing time. This essentially means that if the prediction underestimates the item duration, after the predicted duration expires, the item is predicted to depart immediately, so its predicted departure time is set to the current time.

\textbf{Experimental Comparison}: For clarity of illustration, we divide algorithms into several groups and plot their performance separately.
Figures \ref{fig:errorplot_CB}, \ref{fig:errorplotRCP}, \ref{fig:errorplotLA}, and \ref{fig:errorplot} compare the mean performance ratios of various algorithms as the standard deviation $\sigma$ increases (prediction errors spread over a wider range and the maximum prediction error increases). We also include the First Fit algorithm (which is the best-performing non-clairvoyant algorithm) in these figures. Since First Fit does not use any predicted information, its performance is a constant and independent of the prediction error.
    
    \textbf{Classify-By-Departure-Time vs. Classify-By-Duration}: Figure \ref{fig:errorplot_CB} shows that the mean performance ratio of Classify-By-Departure-Time increases much faster than Classify-By-Duration as prediction errors spread more widely. This is because the number of item categories created by Classify-By-Departure-Time increases linearly with the maximum prediction error, whereas that created by Classify-By-Duration increases logarithmically. Hence, the former is less robust against prediction errors than the latter, despite that the former outperforms the latter in the clairvoyant setting (perfect predictions) as was shown in Figure \ref{fig:clairBox}. 
    
    \textbf{RCP and PPE}: Figure \ref{fig:errorplotRCP} shows that if predictions are perfect, RCP and PPE behave exactly the same, and if prediction errors are small, they also perform similarly. When prediction errors further increase, PPE starts to outperform RCP, which demonstrates the benefit of adjusting the threshold value according to the prediction error. If prediction errors are very large, the threshold value $\alpha/\sqrt{x}$ set by PPE becomes very high (because $\alpha$ increases linearly with the maximum prediction error whereas $x$ increases logarithmically). As a result, most items go to the general bins, reducing PPE towards First Fit. Thus, PPE's performance ratio even decreases with increasing prediction errors. In contrast, the threshold value $1/\sqrt{x}$ set by RCP drops with increasing prediction errors. Hence, more items go to the category bins. Since there are also more item categories as prediction errors increase, the packing efficiency deteriorates. Thus, RCP's performance ratio continues to increase with prediction errors. Comparing RCP/PPE with their modified versions, it can be seen that the modified versions perform better than the original versions. This is because eliminating large bins improves the bin utilization, thereby reducing the number of bins that need to be opened. Therefore, we shall choose the modified PPE algorithm without large bins for further comparison below. 
        
    \textbf{Lifetime Alignment vs. Classify-By-Duration and Hybrid}: Figure \ref{fig:errorplotLA} compares the two variants of Lifetime Alignment (LA) as well as Classify-By-Duration and Reduced Hybrid. The performance ratio of LA(Binary Classification) increases slower than the other three algorithms as prediction errors increase. This is because LA(Binary Classification) uses the item duration information gently for categorization and hence is less vulnerable to prediction errors. In contrast, the other three algorithms use the item duration information more heavily for categorization and thus are more likely to suffer from prediction errors. Comparing these three algorithms, we can see that LA(Geometric Ranges) consistently outperforms the other two across different prediction errors. This demonstrates that dynamically categorizing the bins can achieve higher packing efficiency than fixing the categories of bins for item placement.
    
    \textbf{Prioritized NRT and Greedy}: 
    Figure \ref{fig:errorplot} shows that Prioritized NRT is not resilient to prediction errors, despite that it has the best performance when predictions are perfect. In contrast, Greedy is more robust than Prioritized NRT against prediction errors. Interestingly, we find a similarity between Greedy and Prioritized NRT -- both algorithms consider bins with indicated closing times after the item's departure time for item placement before other bins, aiming to place the item in a bin without increasing its indicated closing time. The difference is that Greedy lists these bins in decreasing order of indicated closing time, whereas Prioritized NRT lists these bins in ascending order of indicated closing time. In a sense, Greedy is \textit{conservative} in that it prefers bins with indicated closing times far beyond the item's departure time, while Prioritized NRT is \textit{aggressive} since it prefers bins with indicated closing times near to the item's departure time. In the presence of prediction errors, the bins preferred by Greedy are more likely to have real closing times after the item's departure time than the bins preferred by Prioritized NRT. That is why Greedy is more robust against prediction errors than Prioritized NRT.
        
    \textbf{Overall Comparison}: Figure \ref{fig:errorplot} also includes the best-performing algorithms from previous comparisons. We also include First Fit -- the best performer in the non-clairvoyant setting (it is a flat curve because First Fit is oblivious to prediction errors). As seen from Figure \ref{fig:errorplot}, at high prediction errors, the modified PPE and First Fit algorithms perform the best. At medium prediction errors, modified PPE is somewhat inferior to First Fit because modified PPE does not have the Any Fit feature. In this case, the Greedy and LA(Binary Classification) algorithms outperform modified PPE. When predictions are close to perfect, the Prioritized NRT algorithm performs the best. An interesting future direction is to develop an adaptive algorithm that switches between different algorithms based on the prediction errors observed so that it can combine their advantages to optimize performance over a wider range of prediction errors.

%% file: content/3.Conclusion.tex
\section{Conclusion}
\label{sec:conclusion}

We have conducted extensive experiments to evaluate DVBP algorithms for virtual machine placement. We have also proposed and analyzed new algorithms or enhancements including Round Robin Next Fit, Nearest Remaining Time, Reduced Hybrid, and modified RCP/PPE without large bins. Our evaluations show that algorithms conducting fixed classification of bins based on item departure times or durations, while necessary to break the lower bound of Any Fit algorithms in competitive ratios, do not perform as well as Any Fit algorithms empirically. Our proposed Prioritized NRT algorithm not only achieves a near-optimal competitive ratio among Any Fit algorithms, but also demonstrates the best empirical performance in the clairvoyant setting. The modified PPE, Greedy and LA(Binary Classification) algorithms show stronger resilience to prediction errors than other algorithms in the learning-augmented setting. Interesting future directions include (1) constructing adaptive algorithms to take advantage of the strengths while avoiding the drawbacks of different approaches empirically; (2) closing the gap between the lower bound on the competitiveness of DVBP and the competitive ratios achieved by DVBP algorithms in the multi-dimensional case theoretically; (3) analyzing the average-case performance of DVBP algorithms under practical input distributions. 

%% file: content/app-CR.tex
\onecolumn
\appendix

\section{Competitive Analysis}

We introduce some notations to facilitate the presentation of competitive analysis. Given a set of items $\mathcal{R}$, we denote $A(\mathcal{R}, t)$ as the subset of items in $\mathcal{R}$ that are active at a time instant $t$:  
 \begin{equation*}
 A(\mathcal{R}, t) =\{r \mid r \in \mathcal{R} \land t \in I(r)\};
 \end{equation*}
and denote $SS_\infty(\mathcal{R}, t)$ as the sum of the $\ell_\infty$-norms of the size vectors of all items in $A(\mathcal{R}, t)$: $$SS_\infty(\mathcal{R},t) = \sum_{r\in A(\mathcal{R}, t)} \norminf{s(r)}.$$

A set of items $\mathcal{R}'$ is said the $\alpha$-extension of another set of items $\mathcal{R}$, if the items in $\mathcal{R}'$ are created through extending the durations of all items in $\mathcal{R}$ by a multiplicative factor $\alpha$ via changing the departure time of each item, i.e., \begin{align*}
\mathcal{R}' = \{ & r' \mid r \in \mathcal{R} \land s(r') = s(r) \land I^-(r') = I^-(r) 
\land l(I(r'))^+ - l(I(r'))^- = \alpha \cdot (l(I(r))^+ - l(I(r))^-) \}.
\end{align*}

The following propositions are obvious and they are useful in the analysis.

\begin{proposition}
\label{prop:extension}
    If $\mathcal{R}'$ is the $\alpha$-extension of $\mathcal{R}$, it holds that 
    $\int_{-\infty}^{\infty}{SS_\infty(\mathcal{R}', t)} \, \mathrm{d}t = \alpha \cdot  \int_{-\infty}^{\infty}{SS_\infty(\mathcal{R}, t) \, \mathrm{d}t}$.
\end{proposition}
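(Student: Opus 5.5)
The approach is to exchange the order of summation and integration, so that the integral of $SS_\infty$ becomes a sum of per-item contributions. By definition, $SS_\infty(\mathcal{R}, t) = \sum_{r \in \mathcal{R}} \norminf{s(r)} \cdot \mathbbm{1}[t \in I(r)]$. Integrating over $t$ then gives
\begin{equation*}
\int_{-\infty}^{\infty} SS_\infty(\mathcal{R}, t)\, \mathrm{d}t = \sum_{r \in \mathcal{R}} \norminf{s(r)} \cdot \bigl(I(r)^+ - I(r)^-\bigr).
\end{equation*}
This is the weighted total duration of the items, with each item weighted by the $\ell_\infty$-norm of its size.

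The exchange of sum and integral needs a short justification. If $\mathcal{R}$ is finite, as in every instance considered here, linearity of the integral applies directly. For countable $\mathcal{R}$, every term is nonnegative, so Tonelli's theorem applies instead. In either case both sides may be infinite, and the identity still holds in $[0,\infty]$.

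Next, apply the same identity to $\mathcal{R}'$. By the definition of the $\alpha$-extension, each $r \in \mathcal{R}$ corresponds bijectively to some $r' \in \mathcal{R}'$, with $s(r') = s(r)$ and duration $I(r')^+ - I(r')^- = \alpha\,(I(r)^+ - I(r)^-)$. The bijection matters: distinct items remain distinct copies even when their extensions coincide, so $\mathcal{R}'$ should be read as a multiset. Substituting gives
\begin{equation*}
\int_{-\infty}^{\infty} SS_\infty(\mathcal{R}', t)\, \mathrm{d}t = \sum_{r \in \mathcal{R}} \norminf{s(r)} \cdot \alpha\,\bigl(I(r)^+ - I(r)^-\bigr),
\end{equation*}
and factoring out $\alpha$ yields the claim.

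There is no real obstacle here. The only point needing care is the indexing convention just mentioned: treating $\mathcal{R}'$ as indexed by $\mathcal{R}$ avoids losing a term when two extended items happen to coincide.
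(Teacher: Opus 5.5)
Your argument is correct. Writing $\int SS_\infty(\mathcal{R},t)\,\mathrm{d}t$ as $\sum_{r}\|s(r)\|_\infty\,(I(r)^+-I(r)^-)$, and noting that the $\alpha$-extension keeps sizes while scaling each duration by $\alpha$, is the reasoning the paper implies when it calls this proposition obvious and gives no proof; your remarks on exchanging sum and integral and on indexing $\mathcal{R}'$ by $\mathcal{R}$ are harmless extra care.
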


\begin{proposition}[Relationship between $\ell_\infty$-norms] For any set of $d$-dimensional vectors $\mathcal{V}$, it holds that
$$\frac{1}{d} \sum_{\vec{v}\in\mathcal{V}}{\norminf{\vec{v}}}  \leq \norminf{\sum_{\vec{v}\in\mathcal{V}}{\vec{v}}} \leq  \sum_{\vec{v}\in\mathcal{V}}{\norminf{\vec{v}}}. $$
\label{prop:inftynorm}
\end{proposition}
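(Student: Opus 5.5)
The right inequality is the triangle inequality for the $\ell_\infty$-norm. The left inequality follows by comparing the maximum over coordinates with the average over coordinates. I would prove them in that order.

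For the upper bound, write $\vec{w}=\sum_{\vec{v}\in\mathcal{V}}\vec{v}$. For each coordinate $i\in\{1,\dots,d\}$ we have
\[
|w_i|=\Big|\sum_{\vec{v}\in\mathcal{V}} v_i\Big|\le \sum_{\vec{v}\in\mathcal{V}}|v_i|\le \sum_{\vec{v}\in\mathcal{V}}\norminf{\vec{v}}.
\]
Taking the maximum over $i$ gives $\norminf{\vec{w}}\le\sum_{\vec{v}\in\mathcal{V}}\norminf{\vec{v}}$. This direction needs no sign assumption on the vectors.

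For the lower bound, I use the fact that all vectors here are size vectors, so their entries are nonnegative; I state this assumption explicitly. Under it, $\norminf{\vec{v}}=\max_i v_i\le\sum_{i=1}^d v_i$ for every $\vec{v}\in\mathcal{V}$. Summing over $\mathcal{V}$ and exchanging the order of summation gives
\[
\sum_{\vec{v}\in\mathcal{V}}\norminf{\vec{v}}\le\sum_{i=1}^d\sum_{\vec{v}\in\mathcal{V}}v_i=\sum_{i=1}^d w_i\le d\cdot\max_i w_i=d\,\norminf{\vec{w}}.
\]
Dividing by $d$ yields the left inequality.

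The only delicate point is the nonnegativity assumption in the lower bound. Without it the bound fails: for $\vec{v}$ and $-\vec{v}$ the sum is $\vec{0}$ but the left side is positive. I would therefore remark that the proposition is intended for item size vectors, whose coordinates are nonnegative, exactly as it will be applied to $SS_\infty$ later. Everything else is routine.
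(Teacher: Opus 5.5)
Your proof is correct. The paper gives no argument at all: it calls the proposition obvious. Your coordinate-wise triangle inequality for the upper bound, and the bound $\max_i v_i \le \sum_i v_i$ summed over $\mathcal{V}$ and then compared with $d\max_i w_i$ for the lower bound, are exactly the standard reasoning it takes for granted.

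Your nonnegativity caveat is also right and worth keeping. As literally stated (``any set of $d$-dimensional vectors'') the left inequality fails for $\{\vec{v},-\vec{v}\}$. The paper only ever applies the proposition to item size vectors, whose entries are nonnegative, so the hypothesis you add is precisely the one the paper uses implicitly.
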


Denote $\sspan(\mathcal{R})$ as the {\em time span} of all items, i.e., the duration in which at least one item is active.
Proposition \ref{prop:lowerbound} follows from the lower bound \eqref{eq:lowerbound} in Section \ref{sec:problem} and Proposition \ref{prop:inftynorm} directly.

\begin{proposition}[Lower bound of $\OPT$]
\label{prop:lowerbound}
The total bin usage time $\OPT$ of an optimal packing solution for a set of items $\mathcal{R}$ has the
following lower bounds: 

$$\OPT \geq \int_{-\infty}^{\infty}{\ceil{\norminf{\sum_{r \in A(\mathcal{R}, t)}{s(r)}}}}\, \mathrm{d}t \geq \frac{1}{d} \cdot \int_{-\infty}^{\infty}SS_\infty(\mathcal{R},t) \, dt,$$
$$\OPT \geq \int_{-\infty}^{\infty}{\ceil{\norminf{\sum_{r \in A(\mathcal{R}, t)}{s(r)}}}}\, \mathrm{d}t \geq \int_{-\infty}^{\infty} \mathbbm{1}_{A(\mathcal{R}, t) \neq \emptyset} \, \mathrm{d}t = \sspan(\mathcal{R}).$$
\end{proposition}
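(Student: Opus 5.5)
The statement is Proposition~\ref{prop:lowerbound}, which has two chains of inequalities. Both start from the same first inequality, \eqref{eq:lowerbound} of Section~\ref{sec:problem}. I would justify it pointwise in time. Fix any $t$ and let $n(t)$ be the number of bins used by an optimal packing at $t$. Every item in $A(\mathcal{R},t)$ sits in one of these bins, and each bin has capacity $1$ in every dimension. So in each dimension $k$ the total size $\sum_{r\in A(\mathcal{R},t)} s(r)_k$ is at most $n(t)$. Taking the maximum over $k$ gives $\norminf{\sum_{r\in A(\mathcal{R},t)} s(r)} \le n(t)$. Since $n(t)$ is an integer, we may take the ceiling. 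Because $\OPT=\int n(t)\,\mathrm{d}t$, integrating over $t$ yields the first inequality of both lines.

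For the second inequality of the first line, I would drop the ceiling, since $\ceil{x}\ge x$. I would then apply the left inequality of Proposition~\ref{prop:inftynorm} to the vector set $\{s(r): r\in A(\mathcal{R},t)\}$. This gives $\norminf{\sum s(r)} \ge \frac{1}{d}\sum\norminf{s(r)} = \frac1d SS_\infty(\mathcal{R},t)$. Integrating over $t$ finishes the first line.

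For the second line, I would argue pointwise again. If $A(\mathcal{R},t)=\emptyset$, the integrand is $\ceil{0}=0$, which equals the indicator. Otherwise some item is active at $t$. Its size vector is nonnegative and (implicitly) nonzero, so the aggregate vector has positive $\ell_\infty$-norm and its ceiling is at least $1$. Integrating the indicator gives exactly the measure of times when some item is active, which is $\sspan(\mathcal{R})$ by definition.

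The only delicate point is the case of an active item whose size vector is identically zero. There the ceiling would be $0$ and the comparison with the indicator would fail. I would handle this by noting that such items can be discarded without loss, or by assuming, as is implicit in the model, that every item has positive size in some dimension. Otherwise the argument is routine.
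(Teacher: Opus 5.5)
Your proposal is correct and follows the paper's route: the pointwise capacity argument behind \eqref{eq:lowerbound}, then $\ceil{x}\ge x$ combined with the left inequality of Proposition~\ref{prop:inftynorm}, and finally $\ceil{\cdot}\ge 1$ at every time when some item is active. Your zero-size caveat is a fair point that the paper passes over, but discarding such items is not literally ``without loss'', since it can shrink $\sspan(\mathcal{R})$. The positivity assumption is therefore the right fix for the chain as stated, even though $\OPT \ge \sspan(\mathcal{R})$ holds regardless, because any active item occupies a bin that is in use.
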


Without loss of generality, we assume that the minimum item duration is $1$ and the maximum item duration is $\mu$.

\input{content/CR/RRNextFit.tex}
\input{content/CR/greedy.tex}
\input{content/CR/HA.tex}

%% file: content/CR/RRNextFit.tex
\subsection{Round Robin Next Fit}
\label{sec:rrnextfitanalysis}

\begin{theorem}
    The Round Robin Next Fit algorithm has a competitive ratio of $(2\mu+1) d+1$.
\end{theorem}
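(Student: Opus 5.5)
The plan follows the sketch given in the main text. We bound the algorithm's total usage time $\mathrm{ALG}=\int_{-\infty}^{\infty} N(t)\,\mathrm{d}t$, where $N(t)$ is the number of open bins at time $t$. The goal is
\[
N(t)\le \mathbbm{1}_{A(\mathcal{R},t)\neq\emptyset}+SS_\infty(\mathcal{R}_1,t)+SS_\infty(\mathcal{R}_2,t)
\]
for two auxiliary item sets with total $\ell_\infty$-mass $\int SS_\infty\,\mathrm{d}t$ equal to $(\mu+1)$ and $\mu$ times that of $\mathcal{R}$. Integrating and applying Propositions~\ref{prop:extension} and~\ref{prop:lowerbound} then gives
\[
\mathrm{ALG}\le \sspan(\mathcal{R})+(2\mu+1)\int SS_\infty(\mathcal{R},t)\,\mathrm{d}t\le \OPT+(2\mu+1)d\cdot\OPT .
\]

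\textbf{Step 1 (partition of each bin's lifetime).} For each bin $b$, split its usage time into alternating \emph{receiving} periods, when the flag points at $b$, and \emph{non-receiving} periods. At any instant at most one bin is receiving; this accounts for the $\mathbbm{1}$ term. A non-receiving period of $b$ begins when some arriving item $r$ fails to fit in $b$ (the flag was at $b$, or passed $b$ while scanning) and is placed elsewhere, possibly in a new bin. During the non-receiving period, $b$ receives no items. Its content at any time $t$ in the period is therefore a subset of its content $S_b$ at the period's start time $t_0$.

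\textbf{Step 2 (charging).} At time $t_0$, $\norminf{\sum_{q\in S_b}s(q)+s(r)}>1$, so by Proposition~\ref{prop:inftynorm}, $\sum_{q\in S_b}\norminf{s(q)}+\norminf{s(r)}>1$. Every $q\in S_b$ arrived before $t_0$ and is still active at $t_0$. The period ends no later than the latest departure among $S_b$, which is less than $t_0+\mu$. Hence all items of $S_b$ stay active in their $(\mu+1)$-extension throughout the period: each $q$ arrived at $I(q)^-\ge t_0-\mu$ and its extended departure is at least $I(q)^-+(\mu+1)$. This bound needs checking, using duration $\ge 1$ and $I(q)^+> t_0$. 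The item $r$ arrives at $t_0$, so its $\mu$-extension covers $[t_0,t_0+\mu)$. Taking $\mathcal{R}_1$ to be the $(\mu+1)$-extension of $\mathcal{R}$ and $\mathcal{R}_2$ the $\mu$-extension, each non-receiving bin at time $t$ is charged mass greater than $1$ from $A(\mathcal{R}_1,t)\cup A(\mathcal{R}_2,t)$.

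\textbf{Step 3 (disjointness of charges), the main obstacle.} Two different non-receiving bins at time $t$ must not double-charge the same item. The original items are fine: the sets $S_b$ lie in distinct bins. The triggering items $r$ need care. The same arriving $r$ can skip past several bins during one round-robin scan, starting the non-receiving periods of all of them. I would fix this by charging only the bin from which the flag actually departs, and arguing that bins merely scanned past already had the flag absent and so remain in their earlier non-receiving period (the flag starts at the bin of the last arrival, so only that bin changes state). Each arriving item then initiates at most one non-receiving period, and it ends any earlier one for at most one bin. If this bookkeeping fails, the fallback is to charge $r$ against the content of the bin it lands in. Once disjointness is settled, summing the charges gives the per-time inequality, and integration completes the bound.
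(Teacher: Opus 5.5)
Your proposal is correct and is essentially the paper's proof: receiving periods are charged to $\sspan(\mathcal{R})\le\OPT$, and each non-receiving bin is charged more than $1$ through the $\mu$-extension of its triggering item and the $(\mu+1)$-extension of its content, while your Step~3 resolution (only the bin the flag leaves changes state, so triggering items are distinct) is exactly what the paper's receiving-period definition uses implicitly. The one fix needed is in Step~2: $I(q)^{-}\ge t_0-\mu$ alone only gives an extended departure $\ge t_0+1$, so instead use $I(q)^{-}>t_0-l_q$ (from $I(q)^{+}>t_0$, with $l_q\ge 1$ the duration of $q$) to get $I(q)^{-}+(\mu+1)l_q>t_0+\mu l_q\ge t_0+\mu$, which is at least every time in the period.
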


\begin{proof}
    In Round Robin Next Fit, a bin can receive new items only when it is pointed by the flag indicating the open bin currently under consideration. Thus, for each bin, we can partition its usage time into receiving periods and non-receiving periods when it is pointed and not pointed by the flag respectively. That is, a receiving period of a bin starts when the bin receives an arriving item and the previous arriving item was placed in another bin (or the arriving item is the first input item), and it ends when a different bin starts to receive items. Hence, each bin has alternating receiving and non-receiving periods in its usage time. 

    Note that at most one bin is in its receiving period at any time. Thus, the total length of the receiving periods of all bins is bounded by $\OPT$, since at least one bin must be used by an optimal packing solution whenever at least one item is active.
    
    Next, we bound the number of bins in their non-receiving periods at each time instant $t$, and take the integral over the time horizon to bound the total length of the non-receiving periods of all bins. At a time instant $t$, if there are a total of $k$ bins in use, exactly one bin is in its receiving period and all other $k-1$ bins are in their non-receiving periods. Denote $b_1, b_2,\dots,b_{k-1}$ as the bins that are in their non-receiving periods. Let $t_i$ be the latest time instant before $t$ when bin $b_i$ switched from its receiving period to its non-receiving period. Obviously, $t_i > t - \mu$ holds, because if bin $b_i$ did not receive any item during the interval $(t - \mu, t]$, it would not remain open at time $t$. By the algorithm definition, at time $t_i$, there must be an arriving item $r_i$ that cannot fit into bin $b_i$ and is hence placed in a different bin, causing $b_i$ to enter its non-receiving period. Thus, denoting $\mathcal{R}_{i}$ as the set of all items placed in bin $b_i$, we have the following inequality:
    \begin{eqnarray*}
        1 < \norminf{s(r_i) + \sum_{r \in A(\mathcal{R}_{i},t_i)}{s(r)} } 
        & \leq & \norminf{s(r_i)} + \sum_{r \in A(\mathcal{R}_{i},t_i)}{\norminf{s(r)} }  
        = \norminf{s(r_i)}+ SS_\infty(\mathcal{R}_{i},t_i).
    \end{eqnarray*}
    
   Adding up the above inequality over all $i = 1,2,\dots,k - 1$ would bound $k-1$ (the number of bins in their non-receiving periods). Let $\mathcal{R}$ be the set of all the input items. To bound the sum of $\norminf{s(r_i)}$'s, we define $\mathcal{R}'$ as the $\mu$-extension of $\mathcal{R}$. It can be shown that the $\mu$-extension of each item $r_i$ in $\mathcal{R}'$ is active at time $t$. Note that item $r_i$ arrives at time $t_i$ and will depart at time $t_i + l_i$, where $l_i$ is the duration of $r_i$. Hence, if the duration of $r_i$ is extended by a factor of $\mu$, $r_i$ will depart at time $t_i + l_i \cdot \mu > t$, because $l_i$ is at least $1$ and $t_i > t - \mu$. Therefore, we have
   $$\sum_{i=1}^{k-1}{\norminf{s(r_i)}} \leq SS_\infty(\mathcal{R}',t).$$
    
   To bound the sum of $SS_\infty(\mathcal{R}_i,t_i)$'s, we define $\mathcal{R}''$ as the $(\mu + 1)$-extension of $\mathcal{R}$. It can be shown that for each item in $\mathcal{R}$ active at time $t_i$, its $(\mu + 1)$-extension in $\mathcal{R}''$ must be active at time $t$. In fact, let $l$ be the duration of an item $r \in \mathcal{R}$. Since $r$ is active at $t_i$, it must arrive later than $t_i - l$. Hence, if the duration of $r$ is extended by a factor of $\mu+1$, $r$ would depart later than $t_i - l + (\mu+1) \cdot l = t_i + \mu \cdot l > t - \mu + \mu \cdot l \geq t$, where the two inequalities are due to $t_i > t-\mu$ and $l \geq 1$ respectively. Hence, the $(\mu+1)$-extension of $r$ must be active at time $t$. Thus,
   $$\sum_{i=1}^{k-1}{SS_\infty(\mathcal{R}_i,t_i)} \leq SS_\infty(\mathcal{R}'',t).$$
    
   In summary, we obtain
   \begin{equation*}
   k-1 < \sum_{i=1}^{k-1}{\norminf{s(r_j)}} + \sum_{i=1}^{k-1}{SS_\infty(\mathcal{R}_i,t_i)}
   \leq SS_\infty(\mathcal{R}',t) + SS_\infty(\mathcal{R}'',t).
   \end{equation*}

   Taking the integral over the time horizon, the total length of the non-receiving periods of all bins is bounded by
   \begin{eqnarray*}
       \lefteqn{\int_{-\infty}^{\infty}  \left( SS_\infty(\mathcal{R}',t) + SS_\infty(\mathcal{R}'',t) \right)
       \, \mathrm{d}t}\\
       & = & \mu \cdot \int_{-\infty}^{\infty}{SS_\infty(\mathcal{R},t)} \,\mathrm{d}t  + (\mu + 1) \cdot \int_{-\infty}^{\infty}{SS_\infty(\mathcal{R},t)} \,\mathrm{d}t \qquad \text{(by Proposition \ref{prop:extension})} \\
       & \leq & \mu d \cdot \OPT + (\mu +1) d \cdot \OPT \qquad \text{(by Proposition \ref{prop:lowerbound})} \\
       & = & (2\mu+1) d  \cdot \OPT.
    \end{eqnarray*}
Together with the receiving periods, the total bin usage time is bounded by $((2\mu+1) d +1)  \cdot \OPT$.
\end{proof}

\begin{theorem}
    The competitive ratio of Round Robin Next Fit algorithm is at least $2\mu d$. 
\end{theorem}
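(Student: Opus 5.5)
The plan is an adversarial instance, parameterized by a large integer $n$ and a tiny $\epsilon>0$, on which Round Robin Next Fit keeps about $n$ bins open for time about $\mu$ each. An optimal solution, by contrast, needs only about $n/(2d)$ bins for unit time plus $O(1)$ bins for time $\mu$. Letting $n\to\infty$ and then $\epsilon\to 0$ drives the ratio to $2\mu d$.

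The factor $2d$ comes from short ``half'' items of the form $(\tfrac12)\vec e_j$ (up to $\epsilon$ perturbations), each of duration $1$. One optimal bin can hold two of them for every dimension $j$, i.e.\ $2d$ such items at once. The factor $\mu$ comes from tiny items of size $\epsilon$ in every dimension with duration $\mu$. These are long-lived but negligible in total, so the optimum packs all of them into one or two bins, while the algorithm is forced to leave one of them in each of its bins.

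The construction proceeds in the following order.

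\textbf{Step 1.} Release the items in a single burst near time $0$, cycling through the dimensions $j=1,\dots,d$. The target is that each algorithm bin $b_i$ ends up with one tiny long item together with short items that saturate it, i.e.\ leave less than the next item's size in the relevant dimension. The next arriving item then fails in $b_i$, and the flag moves on.

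\textbf{Step 2.} Because Round Robin Next Fit is Any Fit, a new bin is opened only if the incoming item fits in \emph{no} open bin. So each bin must be blocked against every item type that arrives later in the burst, not only against the item that first moved the flag away from it. My first attempt is to perturb the sizes to $(\tfrac12-\delta)\vec e_j+\delta\vec 1$ and interleave the tiny long items with these. The $\delta\vec 1$ components should accumulate, so that every earlier bin is loaded beyond $1-\tfrac12$ in every coordinate by the time later items arrive. Meanwhile the sum of $\ell_\infty$-norms inside any optimal bin must still allow about $2d$ half items. If a single burst cannot achieve this, the fallback is to stagger arrivals in rounds. In each round the short items of the previous round depart exactly as the flag leaves the bins. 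Each bin is then refilled to a blocked state by a few fresh short items, before any new half item arrives, so that the tiny long items stay trapped in distinct bins.

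\textbf{Step 3.} Compute the two costs. The algorithm pays at least $n\mu$, since each of its $n$ bins holds a tiny item of duration $\mu$. For the optimum, exhibit an explicit packing: $\lceil n/(2d)\rceil$ bins holding the half items (for about unit time, or for unit time per round in the staggered version), plus one bin holding all tiny items. Its cost is about $n/(2d)+\mu$ per round. The ratio is therefore $\frac{n\mu}{n/(2d)+\mu}\to 2\mu d$.

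The main obstacle is Step 2: guaranteeing that no earlier bin can accept a later item, despite the round-robin search over all open bins. This is exactly where the Any Fit property fights the adversary. I expect the careful choice of the $\delta$ perturbations, or of the refilling items in the staggered version, to be where most of the work lies. It is also where the lower bound $2\mu d$ separates from the upper bound $(2\mu+1)d+1$ of the previous theorem.
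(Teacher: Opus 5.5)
Your Step 2 is the whole difficulty, and neither mechanism you propose resolves it. In the single burst, a bin holding one item $(\tfrac{1}{2}-\delta)\vec e_j+\delta\vec 1$ plus tiny items has load only about $\delta$ in every coordinate $j'\neq j$. It therefore accepts the next half item of dimension $j'$ unless $\delta$ is nearly $\tfrac{1}{2}$. But then the items are nearly $\tfrac{1}{2}\vec 1$, and OPT fits only two per bin. Making the $\delta\vec 1$ parts ``accumulate'' requires many half items per algorithm bin, which destroys the factor $d$.

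Unperturbed half items released dimension by dimension fail too. If the last bin is opened by a half item of dimension $j$, every other bin must then have dimension-$j$ load above $\tfrac{1}{2}$. Those items live for at least unit time, so OPT pays about $n/2$ and the ratio is capped near $2\mu$.

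Your staggered fallback points in the right direction, but the mechanism you describe is backwards. If every bin is refilled to a blocked state before any new half item arrives, the half items cannot enter those bins at all. Also, Step 3 never accounts for opening the $n$ bins. Each must be opened by an item that fits in no open bin, and OPT pays comparably for that phase. So the construction needs a number of rounds $r\to\infty$ to amortize it, which your limit omits.

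The missing observation is about where Round Robin Next Fit looks after a rejection. When an item is rejected by the flagged bin, the algorithm tries the \emph{next} bin in cyclic order first. If that bin is nearly empty it accepts the item, and the partially filled earlier bins are never examined. So no bin has to be blocked against later item types; only the current bin must reject the very next item.

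The paper exploits exactly this.
\begin{itemize}
\item It first opens $dk$ bins with items of size $(1-\epsilon)\vec 1$ and duration $1$, each paired with an $\epsilon\vec 1$ item of duration $1+\tau$.
\item It releases round $i$ just before time $1+\tau+(i-1)\mu$, when every bin holds a single $\epsilon\vec 1$ item.
\item Each round consists of four-item segments: a $d\epsilon$ blocker in dimensions $j$ and $j-1$; an $\epsilon\vec 1$ item of duration $\mu$; an item of size $1-d\epsilon$ in dimension $j$ and $\epsilon$ elsewhere; and another $\epsilon\vec 1$ item of duration $\mu$.
\item Each segment consumes two fresh bins, and every bin receives a long item.
\end{itemize}
The algorithm pays $dk\mu$ per round against at most $1+k/2+\mu$ for OPT, and the ratio tends to $2\mu d$ as $k,r\to\infty$.

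With the same pre-opened-bin mechanism your half items would also work. Release runs of pairs ($\tfrac{1}{2}\vec e_j$ of duration $1$, then $\epsilon\vec 1$ of duration $\mu$) with $j$ fixed; these place one half item in each successive bin. Each of the $d-1$ dimension switches merely doubles up one bin, and OPT packs $2d$ half items per bin.
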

\begin{proof}
We construct an adversary to show that the Round Robin Next Fit algorithm produces a total bin usage time of $2\mu d \cdot \OPT$. The adversary releases items in $r+1$ rounds. Let $\epsilon$ be a sufficiently small number and $k$ be a sufficiently large even number satisfying $dk\epsilon \leq 1$. Denote $\tau$ as a sufficiently small time unit. 

Initially, the adversary releases $2dk$ items at time $0$, and the objective is to make the algorithm open $dk$ bins. The adversary releases these items sequentially. All the odd-indexed items have size $1-\epsilon$ in all dimensions and a duration of $1$. All the even-indexed items have size $\epsilon$ in all dimensions and a duration of $1 + \tau$. It can be seen that one bin will be opened by the Round Robin Next Fit algorithm to accept one even-indexed item and one odd-indexed item (as the sum of their sizes is equal to 1 in all dimensions). Hence, the Round Robin Next Fit algorithm will open $dk$ bins and so does the optimal algorithm.

Next, the adversary releases $r$ rounds of items, where round $i$ is released just before time $1+ \tau + (i - 1)\mu$. We will show that at the beginning of each round, there are $dk$ bins in use, each with an item of size $\epsilon$ in all dimensions. It is clearly true for round $1$. In each round $i$, $2dk$ items are released. The items are partitioned into $d$ groups where the items indexed from $2(j-1)k + 1$ to $2jk$ belongs to the $j$-th group. The $2k$ items in every group are further divided into $\frac{k}{2}$ segments, each including four items. The four items in every segment of the $j$-th group are designed as follows: The first item has size $d\epsilon$ in dimensions $j$ and $j - 1$ (if $j=1$ then in dimensions $1$ and $d$) and $0$ in all other dimensions and has a duration of $1$. The second and the fourth items have size $\epsilon$ in all dimensions and a duration of $\mu$. Conceptually, these even-indexed items are small in size and long in duration. The third item has size $1-d\epsilon$ in dimension $j$ and $\epsilon$ in all other dimensions and has a duration of $1$. It can be shown that the Round Robin Next Fit algorithm will put one even-indexed item in each of the $dk$ bins and thus causing them to continue to be in use until the end of the round (which is carried over to the next round if there is one). For every segment of four items, the first item and second item will be placed into the bin which the flag is pointing to. The third item cannot be placed in the same bin because doing so would cause dimension $j$ to have a total size of $\epsilon + d\epsilon + \epsilon + (1 - d\epsilon) = 1 + 2\epsilon$ which is greater than 1. Hence, the third item will be placed into the next bin together with the fourth item. At the beginning of the $j$-th group of items where $j > 1$, since the first item has size $d\epsilon$ in dimension $j-1$, this causes the first item to be placed into the next bin for the same reason. Hence, the even-indexed items are all placed in distinct bins. In summary, the $dk$ bins are in use throughout all the rounds. Thus, Round Robin Next Fit has a total bin usage time of $dk(1 + \tau + r\mu)$.

In the optimal algorithm, all the even-indexed items can be placed into the same bin because $dk\epsilon \leq 1$. This bin will be in use for $\mu$ time. The first items in all segments of all groups can be placed into the same bin because in each dimension, there are two groups each with $\frac{k}{2}$ items having non-zero size in the dimension. The total size of these items in the dimension is $\frac{k}{2} \cdot 2 \cdot (d\epsilon) = kd\epsilon \leq 1$. Finally, the fourth items in the same segments of all groups can be placed into one bin because their total size in each dimension is $1 - d\epsilon + (d-1)\epsilon = 1 - \epsilon < 1$. Since there are $\frac{k}{2}$ segments in each group, $\frac{k}{2}$ bins are needed. Hence, in each round, one bin and $\frac{k}{2}$ bins are used for a duration of $1$ and one bin is used for a duration of $\mu$. This is repeated for $r$ rounds. Thus, the optimal total bin usage time is at most $dk + r(1 + \frac{k}{2} + \mu)$, where $dk$ is the bin usage time before round $1$.

Therefore, as $k$ and $r$ grow to infinity, the ratio of the total bin usage time by the Round Robin Next Fit algorithm to that by the optimal algorithm is
\begin{eqnarray*}
    \lim_{k,r \rightarrow \infty}\frac{dk(1 + r\mu + \tau)}{dk + r(1 + \frac{k}{2} + \mu)} & = & \lim_{k,r \rightarrow \infty }\frac{\frac{d}{r} + d\mu + \frac{d\tau}{r}}{\frac{d}{r} + \frac{1}{k} + \frac{1}{2} + \frac{\mu}{k}} 
    = \frac{d\mu}{\frac{1}{2}} = 2d\mu. 
\end{eqnarray*}
\end{proof}

%% file: content/CR/greedy.tex
\subsection{Greedy and Prioritized NRT}
\label{sec:greedyandnrtanalysis}

\begin{theorem} \label{theorem:PrioritizedNRT}
    The Greedy and Prioritized NRT algorithms have a competitive ratio of $(\mu + 2)d + 1$.
\end{theorem}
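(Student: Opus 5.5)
We prove the bound $(\mu+2)d+1$ for Greedy and Prioritized NRT with one argument, following the sketch in Section~\ref{sec:clairvoyant}. Fix a time $t$ at which the algorithm has $k$ open bins. For each open bin $b$, consider its indicated closing time as a function of time. Since $b$ is open at $t$, the latest departure among its items exceeds $t$, so at some instant the indicated closing time of $b$ first became greater than $t$. Let $\tau_b \le t$ be the earliest such instant, and let $q_b$ be the item placed into $b$ at $\tau_b$; call it the \emph{triggering item}. The departure of $q_b$ exceeds $t$, so $q_b$ is active at $t$. Moreover, since $q_b$ has duration at most $\mu$, $\tau_b > t-\mu$.

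Sort the $k$ bins so that $\tau_{b_1}\le \tau_{b_2}\le\dots\le\tau_{b_k}$. The key claim is that for every $j\ge 2$, item $q_{b_j}$ did not fit into $b_{j-1}$ at time $\tau_{b_j}$. This holds because at $\tau_{b_j}$ bin $b_{j-1}$ was already open with indicated closing time greater than $t$, which in turn exceeds the departure time of $q_{b_j}$, since $q_{b_j}$ raised $b_j$'s closing time above $t$. So $b_{j-1}$ was a case-(a) bin for $q_{b_j}$. Meanwhile $b_j$ was a case-(b) bin for $q_{b_j}$: its closing time was at most $t$ before placement and strictly increased.
\begin{itemize}
\item Prioritized NRT places an item into a case-(b) bin only if it fits into no case-(a) bin.
\item Greedy picks the bin with the latest closing time among bins that fit. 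Since $b_{j-1}$'s closing time exceeds $b_j$'s, choosing $b_j$ means $q_{b_j}$ did not fit into $b_{j-1}$.
\end{itemize}
A subtle point is simultaneous $\tau$'s or bins opened by $q_b$ itself, where the closing time is undefined before placement. These are handled by treating a newly opened bin as having closing time $-\infty$; Any Fit then still gives non-fit into every older case-(a) bin. This case check is the main obstacle.

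The non-fit inequality gives, by Proposition~\ref{prop:inftynorm}, for each $j\ge2$:
\[
1<\norminf{s(q_{b_j})}+SS_\infty(\mathcal{R}_{j-1},\tau_{b_j}),
\]
where $\mathcal{R}_{j-1}$ is the set of items of $b_{j-1}$. We sum this over $j=2,\dots,k$.

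For the first term, the triggering items are distinct and active at $t$. Their sum is therefore at most $SS_\infty(\mathcal{R},t)$, which needs no extension at all. For the second term, the bins $b_{j-1}$ are distinct. An item $r$ of duration $l\ge1$ that is active at $\tau_{b_j}>t-\mu$ arrived after $\tau_{b_j}-l$, so its $(\mu+1)$-extension departs after $\tau_{b_j}+\mu l>t$. Exactly as in the Round Robin Next Fit proof, this second sum is therefore at most $SS_\infty(\mathcal{R}'',t)$, where $\mathcal{R}''$ is the $(\mu+1)$-extension of $\mathcal{R}$.

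Combining the two bounds yields $k-1< SS_\infty(\mathcal{R},t)+SS_\infty(\mathcal{R}'',t)$ whenever $k\ge1$. Integrate over time, counting the one extra bin only while some item is active. By Proposition~\ref{prop:extension} and Proposition~\ref{prop:lowerbound}, the total usage time is at most
\[
\sspan(\mathcal{R})+(1+\mu+1)\,d\cdot\OPT\le\big((\mu+2)d+1\big)\OPT.
\]
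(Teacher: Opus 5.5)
Your argument follows the paper's route: triggering items sorted by arrival, a non-fit inequality between consecutive bins, the triggering-item sum bounded by $SS_\infty(\mathcal{R},t)$, the bin-load sum bounded via the $(\mu+1)$-extension, and the extra bin integrated to $\sspan(\mathcal{R})\le\OPT$. The Greedy case and the final bookkeeping are correct. The gap is in how you justify the key non-fit claim for Prioritized NRT.

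You assert that at $\tau_{b_j}$ the indicated closing time of $b_{j-1}$, being greater than $t$, exceeds the departure time of $q_{b_j}$, so $b_{j-1}$ is a case-(a) bin for $q_{b_j}$. This does not follow. You only know that the closing time of $b_{j-1}$ and $I(q_{b_j})^+$ both exceed $t$; nothing orders them. If the closing time $c$ of $b_{j-1}$ satisfies $t < c < I(q_{b_j})^+$, then $b_{j-1}$ and $b_j$ are both case-(b) bins for $q_{b_j}$. The rule ``case (b) only if no case-(a) bin fits'' then says nothing about why $q_{b_j}$ skipped $b_{j-1}$.

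The repair uses the other half of the rule. Within case (b), Prioritized NRT picks the bin whose closing time is nearest to, i.e.\ latest before, the item's departure. The closing time of $b_{j-1}$ lies in $(t, I(q_{b_j})^+)$, while that of $b_j$ is at most $t$, so $b_{j-1}$ is strictly preferred. Hence placing $q_{b_j}$ into $b_j$ again forces non-fit into $b_{j-1}$. The two cases together give exactly the observation the paper uses: for an item departing after $t$, Prioritized NRT tries every bin with closing time greater than $t$ before any bin with closing time at most $t$.

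Separately, giving new bins closing time $-\infty$ does not resolve the simultaneous-$\tau$ issue you flag. Instead, sort the bins by the order in which their triggering items are processed; items arriving at the same instant are still placed one at a time. Then $b_{j-1}$'s closing time already exceeds $t$ when $q_{b_j}$ is placed. With these two corrections the rest of your argument goes through unchanged.
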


\begin{proof}

    We bound the number of bins used by the algorithms at each time instant $t$, and integrate it over the time horizon. 

    At a time instant $t$, suppose a total of $k$ bins $b_1,b_2,\dots,b_k$ are in use. Each of these bins must have received at least one item with arrival time in the interval $(t-\mu,t]$ and departure time later than $t$. Let $r_i$ denote the first item placed into bin $b_i$ that arrived in $(t-\mu,t]$ and will depart later than $t$. That is, $r_i$ is the first item that extends the indicated closing time of $b_i$ to be greater than $t$. Let $t_i$ denote the arrival time of $r_i$. Note that $t_i$ is within $(t-\mu,t]$, and $r_i$ may be the item that opened bin $b_i$ (i.e., the very first item placed into $b_i$).

    Without loss of generality, assume that $b_1,b_2,\dots,b_k$ are arranged in ascending order of $t_i$. For each $i \geq 2$, at time $t_i$, the indicated closing time of bin $b_{i-1}$ is greater than $t$, and the indicated closing time of bin $b_{i}$ is at most $t$ by definition. Therefore, according to the Greedy algorithm, bin $b_{i-1}$ has higher priority for accepting item $r_i$ (when it arrives at $t_i$) than bin $b_i$, i.e., the algorithm must have tried placing $r_i$ into $b_{i-1}$ but could not and thus placed $r_i$ into $b_i$. Note that $b_{i-1}$ and $b_{i}$ are not necessarily consecutive in the ordering of bins for placing $r_i$.
    Similarly, by the Prioritized NRT algorithm, when placing item $r_i$  with departure time greater than $t$, it tries all bins with indicated closing times greater than $t$ before those with indicated closing times at most $t$. Thus, bin $b_{i-1}$ also has higher priority for accepting item $r_i$ than bin $b_i$.

    Denote $\mathcal{R}_{i}$ as the set of all items placed in bin $b_i$. Since item $r_i$ cannot fit into bin $b_{i-1}$, we have 
    \begin{eqnarray*}
        1 < \norminf{ s(r_i) +  \sum_{r\in A(\mathcal{R}_{i-1}, t_i)}s(r) } 
        & \leq & \norminf{s(r_i)} + \sum_{r\in A(\mathcal{R}_{i-1}, t_i)}\norminf{s(r)}
        = \norminf{s(r_i)} + SS_\infty(\mathcal{R}_{i-1},t_i).
    \end{eqnarray*}
     
    Adding up the above inequality over all $i \geq 2$ would bound the number of bins $k$. Let $\mathcal{R}$ be the set of all the input items. For the sum of $\norminf{s(r_i)}$'s, since each item $r_i$ is active at time $t$, we have 
    $$ \sum_{i=2}^k{\norminf{s(r_i)}} \leq SS_\infty(\mathcal{R},t).$$

    Next, we bound the sum of $SS_\infty(\mathcal{R}_{i-1},t_i)$'s. Let $\mathcal{R}'$ be the $(\mu + 1)$-extension of $\mathcal{R}$. It can be shown that for each item in $\mathcal{R}$ active at time $t_i$, its $(\mu + 1)$-extension in $\mathcal{R}'$ must be active at time $t$. In fact, let $l$ be the duration of an item $r \in \mathcal{R}$. Since $r$ is active at $t_i$, it must arrive later than $t_i - l$. Hence, if the duration of $r$ is extended by a factor of $\mu+1$, $r$ would depart later than $t_i - l + (\mu+1) \cdot l = t_i + \mu \cdot l > t - \mu + \mu \cdot l \geq t$, where the two inequalities are due to $t_i > t - \mu$ and $l \geq 1$ respectively. Hence, the $(\mu+1)$-extension of $r$ must be active at time $t$. Thus,
 $$ \sum_{i=2}^k{SS_\infty(\mathcal{R}_{i-1},t_i)} \leq SS_\infty(\mathcal{R}',t).$$

    In summary, the number of bins $k$ can be bounded as follows:
    \begin{equation*}
        k  \leq  \sum_{i = 2}^k{\norminf{s(r_i)}} + \sum_{i=2}^k{SS_\infty(\mathcal{R}_{i-1},t_i)} + 1  
        \leq  SS_\infty(\mathcal{R},t) + SS_\infty(\mathcal{R}',t) + 1.
    \end{equation*}

    Finally, we take the integral over the time horizon. Note that the integral of the constant $1$ gives the time span $\sspan(\mathcal{R})$ which is bounded by $\OPT$ according to Proposition \ref{prop:lowerbound}. Thus, the total bin usage time is bounded by
    \begin{eqnarray*}
        \lefteqn{\int_{-\infty}^{\infty}{  \left( SS_\infty(\mathcal{R},t) + SS_\infty(\mathcal{R}',t) 
        \right)\, \mathrm{d}t } + \OPT} \\
         & = & \int_{-\infty}^{\infty}{  SS_\infty(\mathcal{R},t)} \, \mathrm{d}t + (\mu+1) \cdot \int_{-\infty}^{\infty}{SS_\infty(\mathcal{R},t)} \, \mathrm{d}t + \OPT \qquad \text{(by Proposition \ref{prop:extension})} \\ 
         & \leq & ((\mu + 2) d + 1) \cdot \OPT. \qquad \text{(by Proposition \ref{prop:lowerbound})}       
    \end{eqnarray*}

\end{proof}

%% file: content/CR/HA.tex
\subsection{Hybrid and Reduced Hybrid}
\label{sec:reducedhybridanalysis}

\begin{theorem}
    The Hybrid and Reduced Hybrid algorithms achieve a competitive ratio of $O(d\sqrt{\log\mu})$.
\end{theorem}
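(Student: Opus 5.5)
We would follow the single-dimensional analysis of Azar and Vainstein for the Hybrid algorithm and of Liu and Tang for Reduced Hybrid. Wherever that analysis uses the scalar size of a set of items, we substitute the $\ell_\infty$-norm of the aggregate size vector, and then pay a factor $d$ when we compare against $\OPT$ via Proposition~\ref{prop:inftynorm} and Proposition~\ref{prop:lowerbound}. Throughout we normalize the minimum duration to $1$, so the categories are $i=1,\dots,m$ with $m=\lceil\log\mu\rceil+1$. We split the total usage time into that of the general bins and that of the category bins, and bound each by $O(d\sqrt{\log\mu})\cdot\OPT$.

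\textbf{General bins.} First Fit is run on the general bins. At any time $t$, the active items of each category $(i,c)$ (or $i$ for Reduced Hybrid) sitting in general bins have aggregate $\ell_\infty$-norm at most $\frac{1}{2\sqrt{i}}$ plus the size of one item. We mimic the First Fit argument from Theorem~\ref{theorem:PrioritizedNRT}: order the open general bins, and for each bin use the item that could not fit into an earlier bin. This gives
\[
k_{\mathrm{gen}}(t)\le 1+SS_\infty(\mathcal{R}_{\mathrm{gen}},t)+SS_\infty(\mathcal{R}'_{\mathrm{gen}},t)
\]
for suitable extensions. This direct route alone only yields $O(\mu d)$, however. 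So instead we follow the original proof, which charges general-bin usage through the threshold. The sum over categories $i$ whose items are active near $t$ (within window $2^i$) is at most
\[
\sum_{i\le m} \tfrac{1}{\sqrt{i}}=O(\sqrt{\log\mu}).
\]
This is charged against $\sspan(\mathcal{R})\le\OPT$. The overflow items, which sit in general bins but are not counted by the threshold, are charged against
\[
\tfrac{1}{d}\int SS_\infty\,\mathrm{d}t\le\OPT .
\]
The key multi-dimensional point is the failure-to-fit inequality
\[
1<\Bigl\|s(r)+\textstyle\sum s(\cdot)\Bigr\|_\infty\le\|s(r)\|_\infty+SS_\infty .
\]
Each time it is used it converts to $SS_\infty$, and $SS_\infty$ loses only a factor $d$ relative to $\OPT$.

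\textbf{Category bins.} For a fixed category, items have durations within a factor $2$ of each other, and First Fit is used among that category's bins. The standard argument then bounds the category-bin count at time $t$ by $O(1)$ times the $\ell_\infty$-load of that category in a window of length $O(2^i)$, plus one extra bin. The extra bin is paid for because the category is only active when its general-bin load exceeds $\frac{1}{2\sqrt i}$. With the $\ell_\infty$ definition of the threshold, that load is at least $\frac{1}{2\sqrt i}$ in $\ell_\infty$, hence at least $\frac{1}{2\sqrt i}$ in $SS_\infty$. Therefore the usage time of the extra bin is at most $2\sqrt{i}$ times the integrated $SS_\infty$ of that category over a constant-factor extension. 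Summing over categories then gives
\[
O(\sqrt{\log\mu})\int SS_\infty\,\mathrm{d}t+O(1)\int SS_\infty\,\mathrm{d}t=O(d\sqrt{\log\mu})\,\OPT .
\]
For the full Hybrid algorithm, the arrival-window index $c$ only refines the partition. The same accounting applies within each $(i,c)$, since windows of length $2^i$ are disjoint for fixed $i$.

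\textbf{Main obstacle.} The delicate step is the general-bin charging. The scalar proof uses that the total load of general bins is small in a way that relies on additivity of sizes. With vectors, a bin can be "full" in one dimension while nearly empty in the others, so it is unclear whether the $\sqrt{\log\mu}$ term in the original proof might itself pick up an extra factor of $d$ (giving $d^2$). To avoid this, we would bound the general bins entirely through $SS_\infty$ of threshold-limited items, and invoke Proposition~\ref{prop:inftynorm} only once, at the final comparison with $\OPT$. In particular, the $O(\sqrt{\log\mu})$ factor should multiply $\sspan(\mathcal{R})$ and not $\int SS_\infty$, so that only one factor $d$ appears.
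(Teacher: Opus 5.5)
Your category-bin argument is essentially the paper's. For category $i$, you take in each bin in use at $t$ the first item arriving in $(t-2^i,t]$ that departs after $t$. You use First Fit failure for all bins but one, and the threshold violation at that item's arrival for the remaining bin. A $3$-extension then carries everything to time $t$, which yields $O(\sqrt{\log\mu})\int SS_\infty\,\mathrm{d}t\le O(d\sqrt{\log\mu})\cdot\OPT$. When you write out the First Fit step, index these bins $b_1,b_2,\dots$ by opening time. Then, when the chosen item $r_j$ of $b_j$ arrives, $b_{j-1}$ is certainly open and ahead of $b_j$ in the First Fit order.

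\textbf{The gap is in the general bins.} You call this the delicate step but never actually bound it; what is missing is a \emph{pointwise} count.
\begin{itemize}
\item Let $g_k$ be the newest general bin in use at $t$, and $r_k$ the item that opened it.
\item Under the Hybrid rule, an item enters the general bins only if its category's general-bin load, \emph{including the item itself}, has $\ell_\infty$-norm at most $\frac{1}{2\sqrt{i}}$. So there are no ``overflow items'', and $\norminf{s(r_k)}\le\frac12$.
\item Right after $r_k$ is placed, the whole general-bin load has $\ell_\infty$-norm at most $\sum_{i\le m}\frac{1}{2\sqrt{i}}\le\sqrt{m}$.
\item Since $r_k$ failed to fit into every older general bin, each of them has $SS_\infty>\frac12$ at time $I(r_k)^-$. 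Proposition~\ref{prop:inftynorm} bounds their total $SS_\infty$ by $d\sqrt{m}$.
\item Hence $k=O(d\sqrt{\log\mu})$ at every instant, and the general bins cost at most this times $\sspan(\mathcal{R})\le\OPT$.
\end{itemize}
Your weaker invariant (threshold plus one item per category) does not support this. The opening item could be large, so older bins need not carry load above $\frac12$. The extra items could also add up to $m=\Theta(\log\mu)$ to the load. Charging those items to $\frac1d\int SS_\infty\,\mathrm{d}t$ does not work either. They are active at $I(r_k)^-$, which can lie arbitrarily far before $t$, and bridging that gap is exactly what costs the factor $\mu$ in the direct route you rejected.

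\textbf{Your ``main obstacle'' is not a real one, and the proposed remedy cannot be carried out.} The thresholds control only the $\ell_\infty$-norm of an aggregate, whereas failure-to-fit produces $SS_\infty$. Proposition~\ref{prop:inftynorm} is the only bridge between them, so the factor $d$ has to be paid inside the general-bin count. Postponing it to the comparison with $\OPT$ would mean charging general-bin items to $\frac1d\int SS_\infty\,\mathrm{d}t$, which is the $O(\mu d)$ route again.

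This $d$ does not compound. The general bins contribute $O(d\sqrt{\log\mu})\cdot\sspan(\mathcal{R})$ with no further loss. The category bins contribute $O(\sqrt{\log\mu})\int SS_\infty\,\mathrm{d}t\le O(d\sqrt{\log\mu})\cdot\OPT$. The two are added, not multiplied; the paper obtains $(8d\sqrt{m}-d)\cdot\OPT$ for Reduced Hybrid.

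For Hybrid, disjointness of the arrival windows is not what the general-bin sum needs. What it needs is that, since durations are below $2^i$, items from at most two arrival windows with the same $i$ are active simultaneously. This merely doubles the threshold sum.
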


\begin{proof}
We first prove the result for the Reduced Hybrid algorithm. We study the usage times of general bins and category bins separately. 

First, we bound the number of general bins used at any time instant. 
Suppose there are $k$ general bins in use at a given time instant. Let $g_1, g_2, \dots, g_k$ denote these $k$ general bins in ascending order of opening time.
Denote $\mathcal{G}_{i}$ as the set of all items placed in bin $g_i$.
Let $r_k$ denote the first item placed into bin $g_k$ (i.e., $r_k$ is the item that opened bin $g_k$).
Note that there are at most $\ceil{\log\mu}+1$ categories. Right after $g_k$ is placed at its arrival time $I(r_k)^-$, by the algorithm definition, the $\ell_\infty$-norm of the aggregate size of all the active items (including $r_k$) in the general bins must be bounded by \begin{eqnarray*}
\frac{1}{2\sqrt{1}} + \frac{1}{2\sqrt{2}} + \cdots + \frac{1}{2(\sqrt{\ceil{\log\mu}+1})} & \leq & \frac{1}{2} + \int^{\ceil{\log\mu}+1}_{x=1} \frac{1}{2\sqrt{x}}\,\textrm{d}t \\ & = & \frac{1}{2} + (\sqrt{\ceil{\log\mu}+1} - 1) \\ & = & \sqrt{\ceil{\log\mu}+1} - \frac{1}{2}.
\end{eqnarray*}
It follows from Proposition \ref{prop:inftynorm} that 
\begin{equation}
\sum_{j=1}^k SS_\infty(\mathcal{G}_j,I(r_k)^-) \leq d \norminf{\sum_{j=1}^k \sum_{r \in A(\mathcal{G}_j,I(r_k)^-)}{s(r)}} \leq d \left(\sqrt{\ceil{\log\mu}+1} - \frac{1}{2}\right).
\label{eq:3}
\end{equation}
Moreover, we have $\norminf{s(r_k)} \leq \frac{1}{2}$, since $r_k$ is placed into the general bins and $\frac{1}{2}$ is the largest threshold value among all categories. 
By the First Fit rule, for each $j=1,2,\dots,k-1$, since item $r_k$ cannot fit into bin $g_{j}$, we have \begin{equation*}
1 < \norminf{\sum_{r \in A(\mathcal{G}_j,I(r_k)^-)}{s(r)} + s(r_k)} \leq SS_\infty(\mathcal{G}_j,I(r_k)^-) + \norminf{s(r_k)} \leq SS_\infty(\mathcal{G}_j,I(r_k)^-) + \frac{1}{2}, 
\end{equation*}
which implies $SS_\infty(\mathcal{G}_j,I(r_k)^-) > \frac{1}{2}$. 
Therefore, by \eqref{eq:3}, we have 
\begin{eqnarray*}
d \left(\sqrt{\ceil{\log\mu}+1} - \frac{1}{2}\right) & \geq & \sum_{j=1}^k SS_\infty(\mathcal{G}_j,I(r_k)^-) \\ & \geq & (k-2)\cdot \frac{1}{2} + SS_\infty(\mathcal{G}_{k-1},I(r_k)^-) + \norminf{s(r_k)} \\ & > & (k-2)\cdot \frac{1}{2} + 1.
\end{eqnarray*}
It follows that $k < 2d\sqrt{\ceil{\log\mu}+1} - d$. Hence, the total usage time of general bins is bounded by $(2d\sqrt{\ceil{\log\mu}+1} - d) \cdot \OPT$, since at least one bin must be used by an optimal packing solution whenever at least one item is active.

    Next, we bound the total usage time of category bins. We partition all the input items $\mathcal{R}$ into those placed in general bins $\mathcal{R}_{GN}$ and those placed in category bins $\mathcal{R}_{CT}$. $\mathcal{R}_{GN}$ (resp. $\mathcal{R}_{CT}$) is further partitioned into $\mathcal{R}_{GN,i}$'s (resp. $\mathcal{R}_{CT, i}$'s) including items of respective category $i$'s. 
    
    Consider any category $i$. Suppose there are $c_i$ category-$i$ bins $b_1, b_2, \ldots, b_{c_i}$ in use at a time instant $t$. 
    Since the items of category $i$ have durations at most $2^i$, each bin $b_j$ must have received at least one item with arrival time in the interval $(t-2^i,t]$ and departure time later than $t$. Let $r_j$ denote the first item placed into bin $b_j$ that arrived in $(t-2^i,t]$ and will depart later than $t$. Let $t_j$ denote the arrival time of $r_j$. 
    Without loss of generality, assume that $b_1,b_2,\dots,b_{c_i}$ are arranged in ascending order of $t_j$.    
    By the First Fit rule, for each $j = 2,3,\dots,c_i$, the fact that item $r_j$ is placed into bin $b_j$ implies that $r_j$ cannot fit into bin $b_{j-1}$. Denoting $\mathcal{R}_{j}$ as the set of all items placed in bin $b_j$, we have
    \begin{equation}
        1 < \norminf{s(r_j) + \sum_{r \in A(\mathcal{R}_{j-1},t_j)}{s(r)}} 
        \leq \norminf{s(r_j)} + SS_\infty(\mathcal{R}_{j-1},t_j).
        \label{eq:1}
    \end{equation}
    As for bin $b_1$, the fact that item $r_1$ is placed into bin $b_1$ and not general bins implies that
    \begin{equation*}
    \frac{1}{2\sqrt{\ceil{\log\mu}+1}} \leq \frac{1}{2\sqrt{i}} <  \norminf{s(r_1) + \sum_{r \in A(\mathcal{R}_{GN,i},t_1)}{s(r)} }  
    <  \norminf{s(r_1)} + SS_\infty(\mathcal{R}_{GN,i},t_1). 
    \end{equation*}   
    Hence,
    \begin{equation}
       1 <  2\sqrt{\ceil{\log\mu}+1} \cdot \norminf{s(r_1)} + 2\sqrt{\ceil{\log\mu}+1} \cdot SS_\infty(\mathcal{R}_{GN,i},t_1).
       \label{eq:2}
    \end{equation}
    By adding up the inequalities \eqref{eq:1} and \eqref{eq:2}, we can bound $c_i$, the number of category-$i$ bins in use at time $t$. First, we bound the sum of $\norminf{s(r_j)}$'s. Since each item $r_j$ is still active at time $t$, 
    we have 
    $$2\sqrt{\ceil{\log\mu}+1} \cdot \norminf{s(r_1)} + \sum_{j=2}^{c_i}{\norminf{s(r_j)}} \leq 2\sqrt{\ceil{\log\mu}+1} \cdot SS_\infty(\mathcal{R}_{CT,i},t).$$
    Next, we bound the sum of $SS_\infty(\mathcal{R}_{j-1},t_j)$'s. Let $\mathcal{R}'_{CT,i}$ (resp. $\mathcal{R}'_{GN,i}$) be the $3$-extension of $\mathcal{R}_{CT,i}$ (resp. $\mathcal{R}_{GN,i})$. 
    It can be shown that for each item in $\mathcal{R}_{CT,i}$ or $\mathcal{R}_{GN,i}$ active at time $t_j$, its 3-extension in $\mathcal{R}'_{CT,i}$ or $\mathcal{R}'_{GN,i}$ is active at time $t$. In fact, let $l$ be the duration of an item $r \in \mathcal{R}_{CT,i} \cup \mathcal{R}_{GN,i}$. Since $r$ is a category-$i$ item, we have $l \in [2^{i-1},2^i)$. Since $r$ is active at $t_j$, it must arrive after $t_j - l$ and thus its $3$-extension must depart after $t_j - l + 3l = t_j + 2l \geq t_j + 2 \cdot 2^{i - 1} = t_j + 2^{i} \geq t$, where the last inequality is because $t_j \in (t-2^i,t]$. 
    Thus, we have the following inequalities:
    $$\sum_{j=2}^{c_i} SS_\infty(\mathcal{R}_{j-1},t_j) \leq SS_\infty(\mathcal{R}'_{CT,i}, t),$$
    $$SS_\infty(\mathcal{R}_{GN,i},t_1) \leq SS_\infty(\mathcal{R}'_{GN,i},t).$$
    In summary, we obtain
    \begin{eqnarray*}
        c_i & < & 2\sqrt{\ceil{\log\mu}+1} \cdot \norminf{s(r_1)} + \sum_{j=2}^{c_i}{\norminf{s(r_j)}} 
        + 2\sqrt{\ceil{\log\mu}+1} \cdot SS_\infty(\mathcal{R}_{GN,i},t_1) + \sum_{j=2}^{c_i} SS_\infty(\mathcal{R}_{j-1},t_j) \\
        & \leq & 2\sqrt{\ceil{\log\mu}+1} \cdot {SS_\infty(\mathcal{R}_{CT,i},t)} + 2\sqrt{\ceil{\log\mu}+1} \cdot SS_\infty(\mathcal{R}'_{GN,i},t) +{SS_\infty(\mathcal{R}'_{CT,i},t)}.
    \end{eqnarray*}

Finally, integrating over the time horizon, the total usage time of category bins is bounded by 
    \begin{eqnarray*}
        \lefteqn{\int_{-\infty}^{\infty}{\sum_{i=1}^{\ceil{\log\mu}+1}\left(2\sqrt{\ceil{\log\mu}+1} \cdot {SS_\infty(\mathcal{R}_{CT,i},t)} + 2\sqrt{\ceil{\log\mu}+1} \cdot SS_\infty(\mathcal{R}'_{GN,i},t) +{SS_\infty(\mathcal{R}'_{CT,i},t)}\right)} \, \mathrm{d}t} \\
        & = & \int_{-\infty}^{\infty} \left(2\sqrt{\ceil{\log\mu}+1} \cdot SS_\infty(\mathcal{R}_{CT},t) + 2\sqrt{\ceil{\log\mu}+1} \cdot SS_\infty(\mathcal{R}'_{GN},t) + SS_\infty(\mathcal{R}'_{CT},t)\right) \, \mathrm{d}t\\
& =& \int_{-\infty}^{\infty} \left(2\sqrt{\ceil{\log\mu}+1} \cdot SS_\infty(\mathcal{R}_{CT},t) + 6\sqrt{\ceil{\log\mu}+1} \cdot SS_\infty(\mathcal{R}_{GN},t) + 3\cdot SS_\infty(\mathcal{R}_{CT},t)\right) \, \mathrm{d}t\\
& \leq & \int_{-\infty}^{\infty} 6\sqrt{\ceil{\log\mu}+1} \cdot SS_\infty(\mathcal{R},t) \, \mathrm{d}t\\
& \leq & 6d\sqrt{\ceil{\log\mu}+1} \cdot \OPT. \qquad \text{(by Proposition \ref{prop:lowerbound})}  
    \end{eqnarray*}

Overall, the total usage time of all bins is bounded by $(8d\sqrt{\ceil{\log\mu}+1}-d) \cdot \OPT$.

Proving the result for the Hybrid algorithm is almost the same as the Reduced Hybrid algorithm. Recall that in the Hybrid algorithm, each category $(i,c)$ includes all the items with durations in the range $[2^{i-1}, 2^i)$ for an integer $i$ and arrival times in the interval $[(c - 1) \cdot 2^i, c \cdot 2^i)$ for an integer $c$. For any given $i$, since the duration of each item is at most twice the length of the interval for arrival times, there can be at most two types of items (with the same $i$) active at the same time. We can apply the same analysis for each category $(i,c)$ in Hybrid as that for each category $i$ in Reduced Hybrid. The result of the total bin usage time will just have an additional multiplicative factor of $2 = O(1)$. Thus, it will not change the asymptotic order of the competitive ratio.
\end{proof}

%% file: content/app-huawei.tex
\subsection{Experimentation with Huawei Dataset} \label{app:huawei}

As mentioned in Section \ref{sec:setup}, we have also experimented with the Huawei-East-1 dataset \cite{sheng2021vmagent} (\url{https://github.com/huaweicloud/VM-placement-dataset}). The dataset has a schema including the following fields:

\begin{itemize}
    \item vmid (the id of the VM created or terminated)
    \item time (the timestamp of the event)
    \item CPU core, memory (the resources used by the VM)
    \item type (event type -- creation or termination)
\end{itemize}

We clean the dataset by keeping only the VMs with both creation and termination events. There are a total of 116,313 such VMs for our experimentation. The number of resource dimensions is $d=2$ (CPU core and memory). The dataset does not provide any information about the resource capacities of physical machines (PM). Since the largest VM in the dataset uses 64 CPU cores and 128 GB memory, we set the CPU capacity to 64, 100 or 128 cores and the memory capacity to 128, 200 or 256 GB (combining them gives 9 different PM capacities). 

We run different packing algorithms for each of the assumed PM capacities and present the box plot of the performance ratios obtained in Figure \ref{fig:huawei}.
The evaluation results using the Huawei Cloud dataset show largely similar trends to those using the Microsoft Azure dataset. 
One noticeable difference is that for the Huawei dataset, $\ell_2$-norm generally works the best for Best Fit. Hence, $\ell_2$-norm is used to represent Best Fit in Figure \ref{fig:nclairhw}.

\begin{figure}[!t]
    \centering

    \begin{subfigure}{0.32\textwidth}
        \centering
        \includegraphics[width = \textwidth]{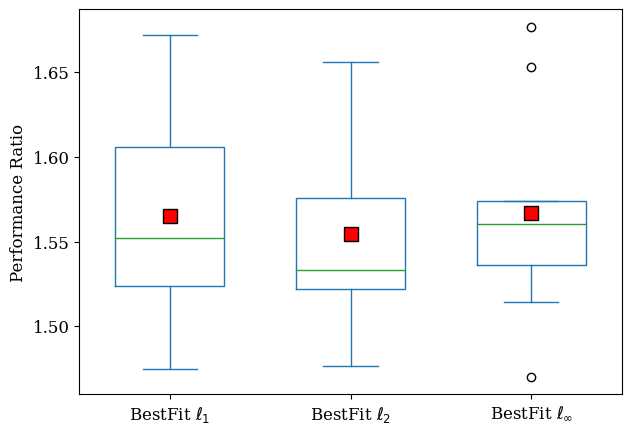}
        \caption{Performance of Best Fit}
        \label{fig:bestBoxhw}
    \end{subfigure}
    ~
    \begin{subfigure}{0.5\textwidth}
        \centering
        \includegraphics[width = \textwidth]{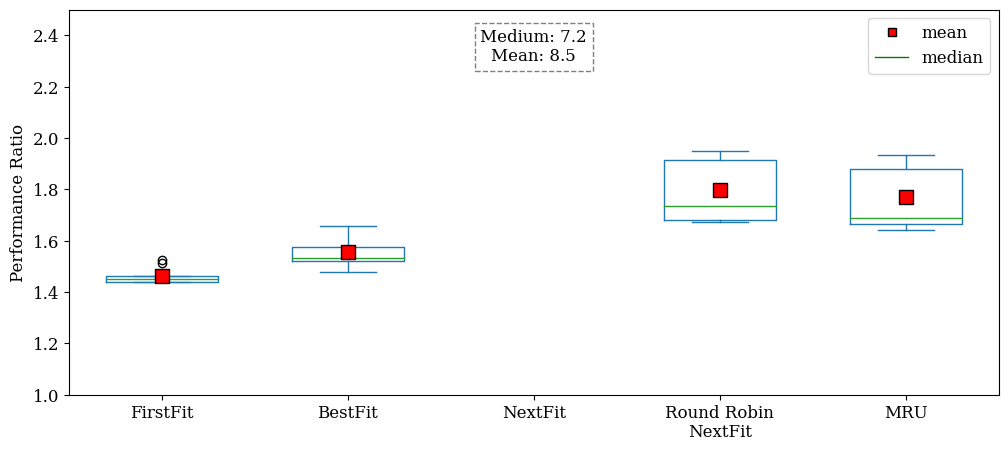}
        \caption{Performance of non-clairvoyant algorithms}
        \label{fig:nclairhw}
    \end{subfigure}

    \begin{subfigure}{0.35\textwidth}
        \centering
        \includegraphics[width = \textwidth]{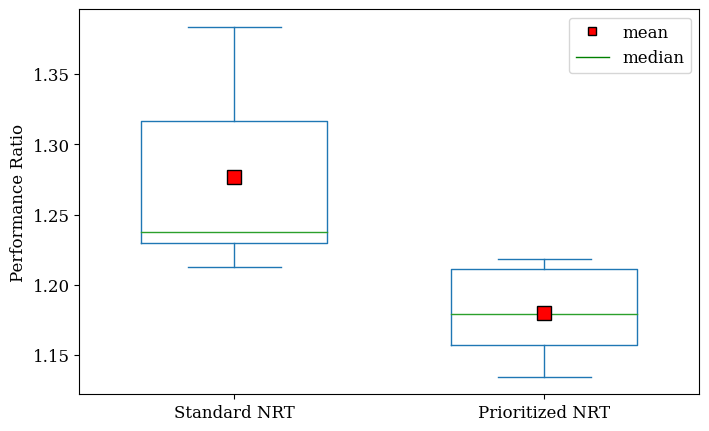}
        \caption{Performance of NRT}
        \label{fig:NRThw}
    \end{subfigure}
    ~
    \begin{subfigure}{0.4\textwidth}
        \centering
        \includegraphics[width = \textwidth]{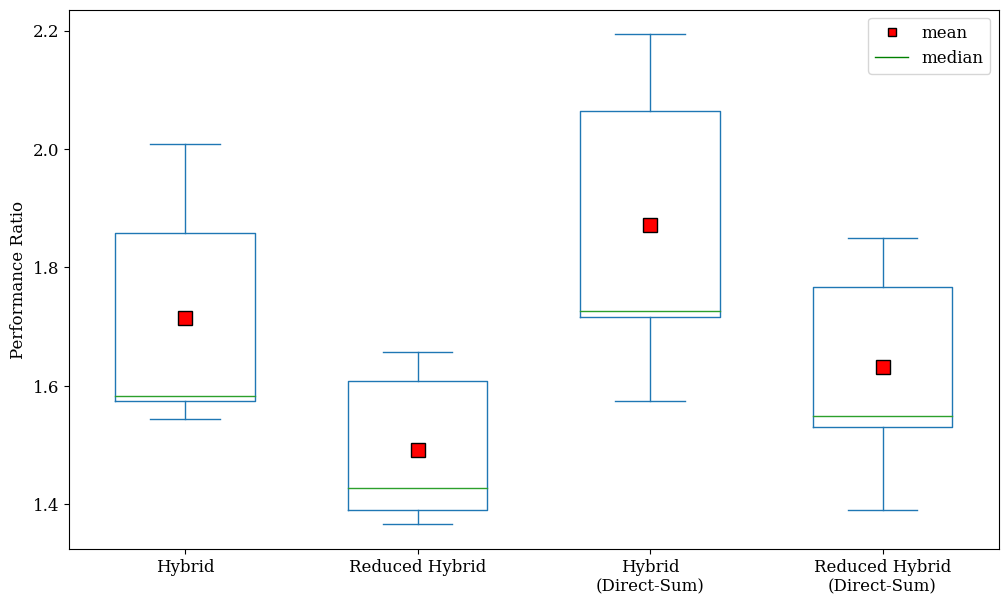}
        \caption{Performance of Hybrid algorithms}
        \label{fig:HAhw}
    \end{subfigure}

    \begin{subfigure}{0.4\textwidth}
        \centering
        \includegraphics[width=\textwidth]{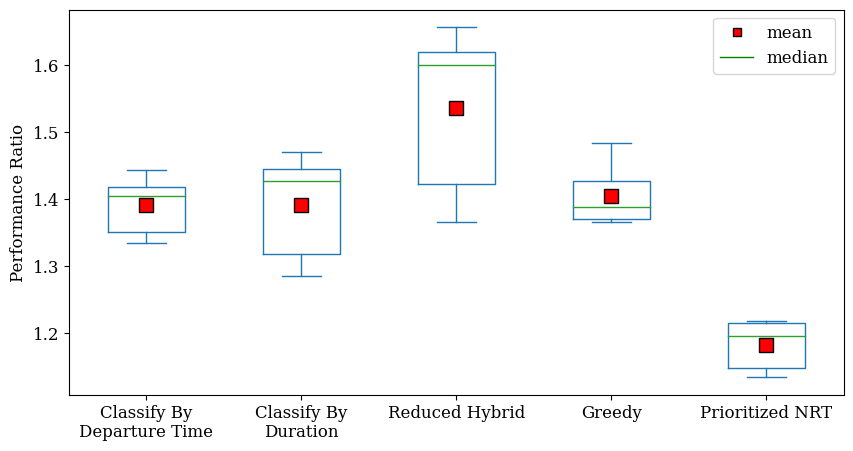}
        \caption{Performance of clairvoyant algorithms}
        \label{fig:clairhw}
    \end{subfigure}

    \begin{subfigure}{.40\textwidth}
        \centering
        \includegraphics[width=\linewidth]{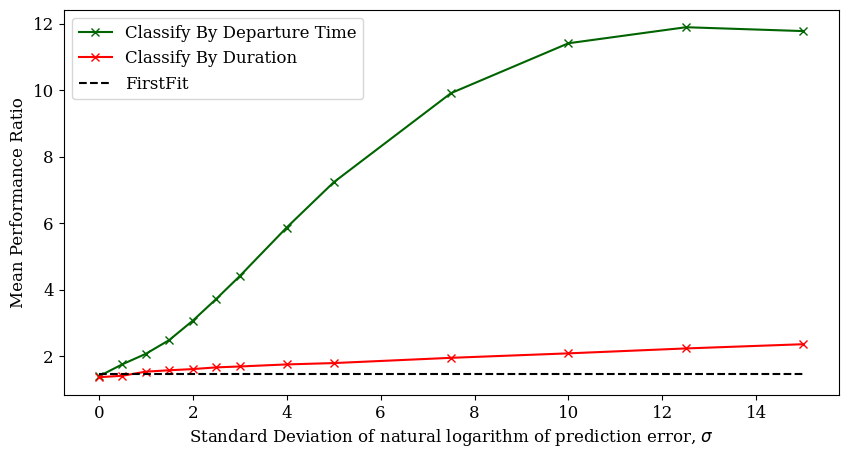}
        \caption{Classify-By-Departure-Time and Classify-By-Duration}
        \label{fig:errorplot_CB_huawei}
    \end{subfigure}
    ~
    \begin{subfigure}{.4\textwidth}
        \centering
        \includegraphics[width=\linewidth]{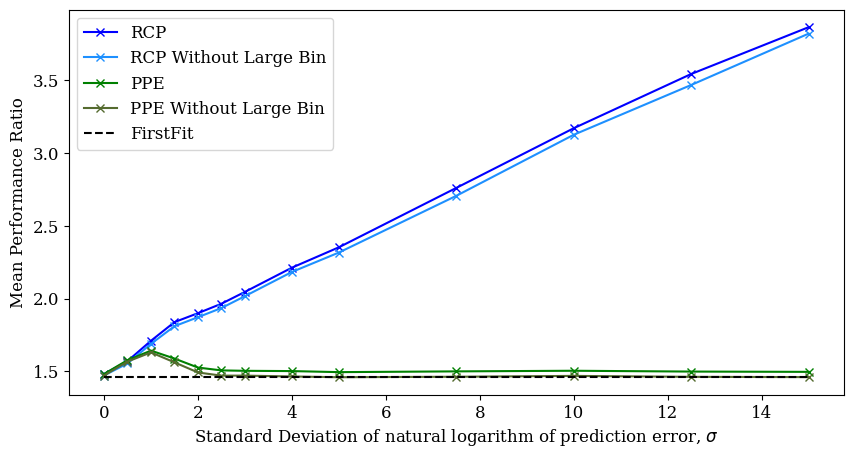}
        \caption{RCP and PPE}
        \label{fig:errorplotRCP_huawei}
    \end{subfigure}

    \begin{subfigure}{.4\textwidth}
        \centering
        \includegraphics[width=\linewidth]{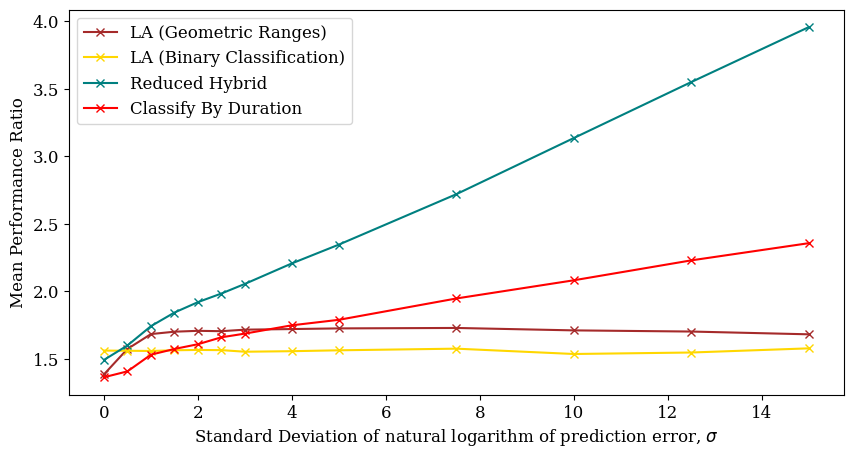}
        \caption{Lifetime Alignment, Classify-By-Duration and Reduced Hybrid}
        \label{fig:errorplotLA_huawei}
    \end{subfigure}
    ~
    \begin{subfigure}{.4\textwidth}
        \centering
        \includegraphics[width=\linewidth]{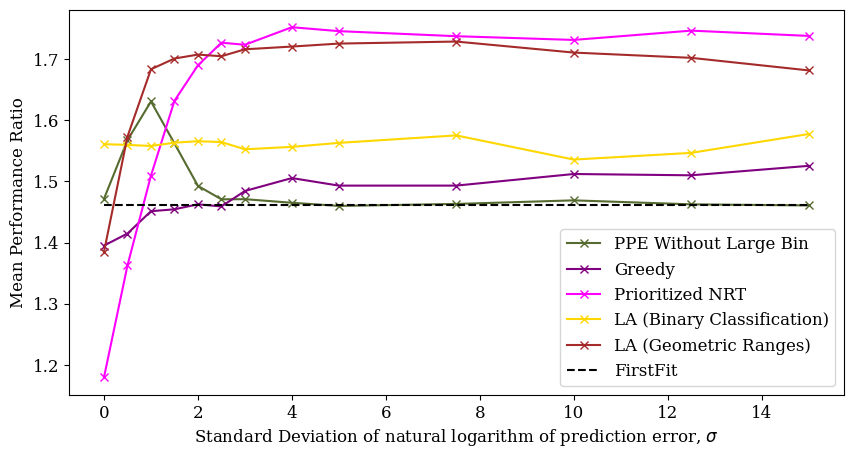}
        \caption{Mean performance ratios of selected algorithms}
        \label{fig:errorplot_huawei}   
    \end{subfigure}

    \caption{Performance of algorithms using the Huawei dataset}
    \label{fig:huawei}
\end{figure}

%% file: content/app-uniform_dist_error.tex
\subsection{Experimentation with Uniform Distribution of Prediction Errors in Learning-augmented Setting} \label{app:uniform}

As mentioned in Section \ref{sec:experimentforlearning}, we have also experimented with a uniform distribution of prediction errors. Assume that predictions have a maximum multiplicative error of $\epsilon$ ($\epsilon \geq 1$). For each item, we first randomly generate a multiplicative prediction error $\delta$ according to a uniform distribution between $1$ and $\epsilon$. Then, we flip an unbiased coin to decide whether the prediction underestimates or overestimates the item duration, i.e., it has equal probabilities to underestimate and overestimate. If it is an underestimation, the predicted duration is set to $\mathrm{Pdur}(r) = \frac{\mathrm{Rdur}(r)}{\delta}$. If it is an overestimation, the predicted duration is set to $\mathrm{Pdur}(r) = \delta \cdot \mathrm{Rdur}(r)$. In either case, it means that the item is predicted to depart at time $I(r)^- + \mathrm{Pdur}(r)$. Note that if $\epsilon = 1$, the predictions perfectly match the real durations of all items. Our evaluation includes a wide range of $\epsilon$ settings: $\epsilon = 1,2,3,4,5,10,20,50,100,1000,10000,100000,1000000$. Figure \ref{fig:errorplot_uni} presents the results.

\begin{figure}[t!]
\begin{subfigure}{0.48\linewidth}
    \centering
    \includegraphics[width=0.9\linewidth]{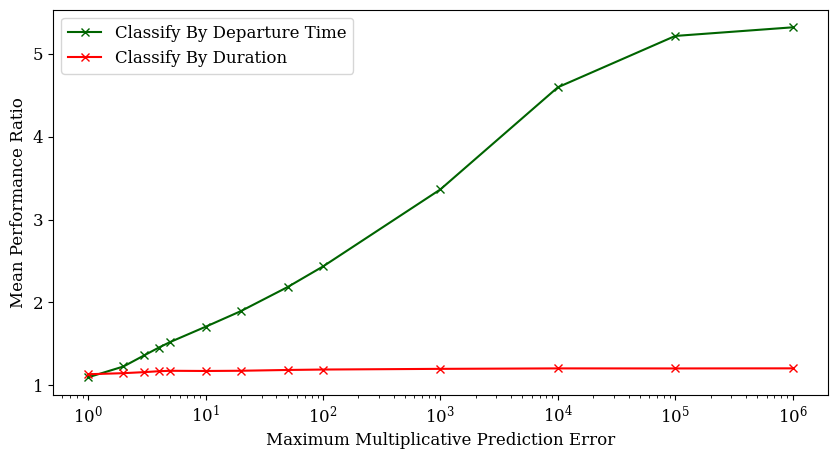}
    \caption{Classify-By-Departure-Time and Classify-By-Duration}
    \label{fig:errorplot_CB_uni}
\end{subfigure}
~
\begin{subfigure}{0.48\linewidth}
    \centering
    \includegraphics[width=0.9\linewidth]{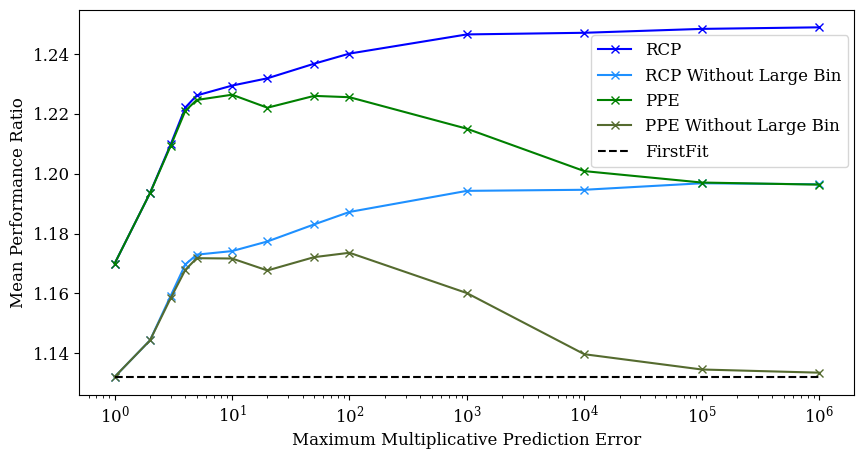}
    \caption{RCP and PPE}
    \label{fig:errorplotRCP_uni}
\end{subfigure}

\begin{subfigure}{0.48\linewidth}
    \centering
    \includegraphics[width=0.9\linewidth]{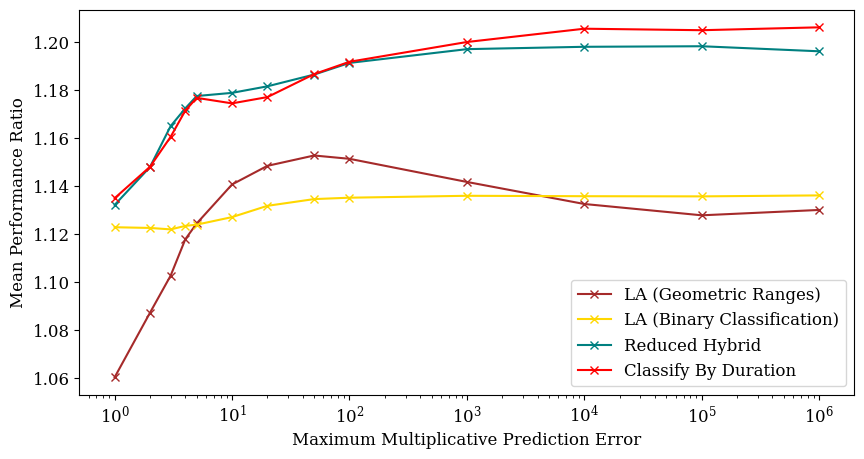}
    \caption{Lifetime Alignment, Classify-By-Duration and Reduced Hybrid}
    \label{fig:errorplotLA_uni}
\end{subfigure}
~
\begin{subfigure}{0.48\linewidth}
    \centering
    \includegraphics[width=0.9\linewidth]{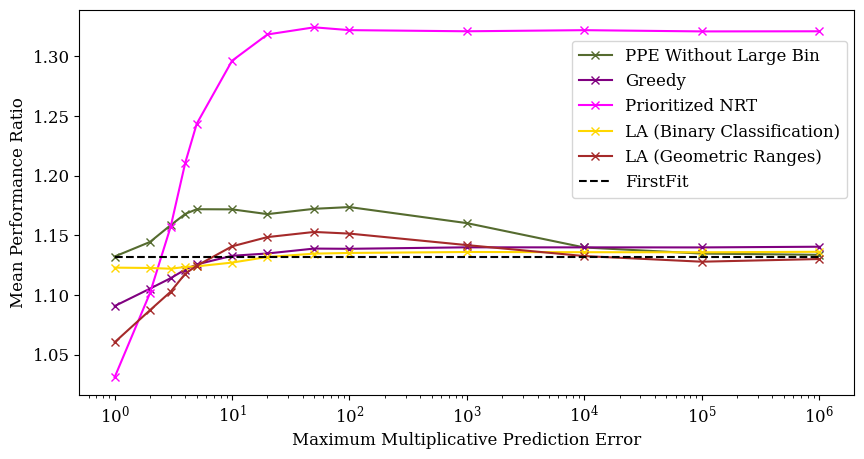}
    \caption{Mean performance ratios of selected algorithms}
    \label{fig:errorplotLA_uni}
\end{subfigure}
    \caption{Learning-Augmented Setting with Uniform Distribution of Prediction Errors}
    \label{fig:errorplot_uni}
\end{figure}